\mathchardef\mhyphen="2D
\newcommand{\Xhat}{\widehat{X}}
\newcommand{\Lhat}{\widehat{L}}
\newcommand{\dist}{\mathrm{d}}
\newcommand{\R}{\mathbb{R}}
\newcommand{\e}{\varepsilon}
\newcommand{\ball}{\mathrm{ball}}
\DeclareMathOperator*{\argmin}{argmin}
\newcommand{\wfs}{\mathrm{wfs}}
\renewcommand{\because}[1]{& \left[\text{#1}\right]}
\newtheorem{theorem}{Theorem}
\newtheorem{lemma}[theorem]{Lemma}
\newtheorem{definition}[theorem]{Definition}
\newtheorem{corollary}{Corollary}[theorem]
\title{Adaptive Metrics for Adaptive Samples}
  \author{Nicholas J. Cavanna\\
University of Connecticut\\
  \texttt{nicholas.j.cavanna@gmail.com}
 \and 
 Donald R.~Sheehy\\
 University of Connecticut\\
 \texttt{don.r.sheehy@gmail.com}
 }
\date{}
\begin{document}
  \maketitle
 \begin{abstract}
  In this paper we consider adaptive sampling's local-feature size, used in surface reconstruction and  geometric inference, with respect to an arbitrary landmark set rather than the medial axis and relate it to a path-based adaptive metric on Euclidean space.
  We prove a near-duality between adaptive samples in the Euclidean metric space and uniform samples in this alternate metric space which results in topological interleavings between the offsets generated by this metric and those generated by an linear approximation of it. 
 After smoothing the distance function associated to the adaptive metric, we apply a result from the theory of critical points of distance functions to the interleaved spaces which yields a computable homology inference scheme assuming one has Hausdorff-close samples of the domain and the landmark set.
\end{abstract}

  \section{From Surface Reconstruction to Homology Inference} % (fold)
\label{sec:introduction}

  To reconstruct a surface from a point set, one needs a sample that is sufficiently dense with respect to not just the local curvature of the surface, but also the distance to parts of the surface that are close in the embedding but far in geodesic distance.
  Otherwise, algorithms have no way of identifying which geometrically close sample points correspond to local neighborhoods in the surface.
  Adaptive sampling with respect to the so-called \emph{local feature size}, introduced by Amenta and Bern~\cite{amenta98new}, neatly characterized said ``good'' samples and was then used in many later works on surface reconstruction with topological guarantees~\cite{bookDey2007}.
  Such \emph{adaptive samples} are in contrast to \emph{uniform samples}, where a single parameter determines the density, usually chosen as the minimum of the local feature size,  which results in amuch larger sample.
  
  Later work on geometric and homological inference related the topology of unions of balls centered at a sample $\Xhat$ near the unknown set $X$ to the topology of $X$ itself.
  A union of balls with a fixed radius can be viewed as a sublevel set of the distance function to $\Xhat$.
  If we have an adaptive sample, then we would like to scale the radii of the balls as well.
  However, if the sample is adaptive with respect to a local feature size defined as the distance to an unknown set $L$, another approximation $\Lhat$ near $L$ is necessary.
  Indeed, one interpretation of some Voronoi-based surface reconstruction algorithms is that first an approximation $\Lhat$ to the medial axis $L$ is computed from the Voronoi diagram of the sample $\Xhat$ of the unknown surface $X$.
  
  We present a new perspective on adaptive samples.
  For \emph{any} pair of disjoint, compact sets $X$ and $L$, we define a metric on $\R^d\setminus L$ with the property that a uniform sample of $X$ in the new metric corresponds to an adaptive sample in the Euclidean metric.
  This metric can be viewed as a smoothing of an adaptive metric used by Clarkson~\cite{clarkson06building} and this formulation has connections to recent work on path planning~\cite{agarwal16efficient,wein08planning} and density-based distances~\cite{cohen15approximating}.
  The main motivation is establishing a connection between adaptive sampling theory and the critical point theory of distance functions used extensively to prove topological guarantees in topological data analysis~\cite{chazal09sampling, chazal08smooth,grove93critical}.
  The latter theory gives natural topological equivalences between sublevel sets of smooth Riemannian distance functions.
By considering  these topological equivalences with respect to constructed interleavings  between the sublevel sets of a smooth adaptive metric distance function and the unions of Euclidean balls constructed from approximations to $X$ and $L$ we are able to to provide a computable homology inference method for the domain.
 
  \section{Background} % (fold)
\label{sec:background}

For the duration of this section, let $L$ and $X$ denote disjoint subsets of $\R^d$ that are compact with respect to the standard Euclidean topology.
For $x,y\in\R^d$, denote by $\text{Path}(x,y)$ the set of continuous paths from $x$ to $y$, parametrized by Euclidean arc-length.
In a similar fashion, denote by $\text{Path}(x,Y)$ the set of continuous paths from $x$ to any $y\in Y$.
For any compact set $L\subset \R^d$, define the function $f_L(\cdot): \R^d\rightarrow \R$ for $x\in\R^d$ as $f_L(x):= \min_{\ell\in L}\|x-\ell\|$.

\begin{definition} For compact $L\subset \R^d$, and $x,y\in \R^d\setminus L$, the \emph{adaptive metric} with respect to $L$ is
\[
\dist^L(x,y):= \inf_{\gamma\in\text{Path}(x,y)} \int\limits_{z\in\gamma} \frac{dz}{f_L(z)}.
\]
\end{definition}
This is only well-defined over $\R^d\setminus L$ as $f_L(l)=0$ for all $l\in L$ so any path achieving a finite integral is entirely contained in this region, and notably this implies the paths are bounded away from $L$.
We leave it to the reader to confirm that this is indeed a metric.
 The length of a unit-speed path $\gamma:[0,a]\rightarrow \R^d$ is denoted as $|\gamma|:=\int_\gamma dz = \int^{a}_0 dt$. 
 We use the descriptor adaptive for the metric as the definition naturally describes a collection of metrics dependent on the choice of $L$.
 The adaptive metric notably coincides with the minimum cost of the paths between two given points as used in the robot motion-planning works of Wein et al. and later Agarwal et al.~\cite{wein08planning,agarwal16efficient}.
 In particular, the cost of a path is the line integral of the inverse of the ``clearance" of each point along the path to the obstacle, where the obstacles in consideration are the polygons in the plane they consider for their obstacles.
Visualizing (and computing) the optimal paths becomes quite complicated as $L$ becomes more irregular.
 In~\cite{wein08planning} it is proven that when $L$ is a single point in $\R^2$ the minimal paths are logarithmic spirals, and when $L$ is a line segment the minimal paths are circular arcs.

For $y\in\R^d\setminus L$, define $f_X^L(y): = d^L(y,X)=\min_{x\in X}\dist^L(y,x)$, and $\widehat{f^L_X}(y):=\min_{x\in X}\frac{\|y-x\|}{f_L(x)}$. Note that $f_X^L(\cdot)$ is a proper distance function and $\widehat{f^L_X}(\cdot)$ is not, however the latter can be viewed as a first-order approximation of the former.
The distance function $f_X^L(\cdot)$ is not smooth in general, but a smooth Riemannian distance function can be constructed that approximates it arbitrarily well (see Section~\ref{sec:smoothing_adaptive}), however it is a $1$-Lipschitz function as shown in Lemma~\ref{lem:lipschitz}.
A function $f$ between two metric spaces $(X,\dist_X)$ and $(Y,\dist_Y)$ is said to be $K$-Lipschitz for some constant $K>0$, if for all $x,y\in X$, $\dist_Y(f(x),f(y))\leq K\dist_X(x,y)$.

\begin{lemma}\label{lem:lipschitz}
$f^L_X(\cdot)$ is a $1$-Lipschitz function from $(\R^d\setminus L, \dist^L)$ to $\R$.
\end{lemma}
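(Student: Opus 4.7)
The plan is to derive the bound directly from the triangle inequality for the metric $\dist^L$, in exact analogy with the standard fact that the distance to a nonempty subset of any metric space is automatically $1$-Lipschitz.

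First I would check that $f^L_X$ is well-defined and real-valued on $\R^d\setminus L$. Since $X$ and $L$ are disjoint compact subsets of $\R^d$, we have $X\subset \R^d\setminus L$, and for any $y\in\R^d\setminus L$ and $x\in X$ one can exhibit a path from $y$ to $x$ that stays bounded away from $L$, so the line integral of $1/f_L$ along it is finite; consequently $\dist^L(y,x)<\infty$ for every $x\in X$, and the infimum defining $f^L_X(y)$ is a nonnegative real number.

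Next, fix arbitrary $x,y\in\R^d\setminus L$ and any $z\in X$. The triangle inequality for $\dist^L$ (whose verification is left to the reader in the excerpt) gives
\[
\dist^L(y,z)\le \dist^L(y,x)+\dist^L(x,z).
\]
Taking the infimum over $z\in X$ on both sides, and noting that $\dist^L(y,x)$ does not depend on $z$, yields $f^L_X(y)\le \dist^L(y,x)+f^L_X(x)$. By symmetry of $\dist^L$ and the symmetric roles of $x$ and $y$, the same argument produces $f^L_X(x)\le \dist^L(x,y)+f^L_X(y)$. Combining these gives
\[
|f^L_X(x)-f^L_X(y)|\le \dist^L(x,y),
\]
which is the $1$-Lipschitz property.

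There is no substantive obstacle here: the argument is the one-line consequence of the triangle inequality that works in any metric space. The only care required is confirming that $\dist^L$ really takes finite values between points of $\R^d\setminus L$ and is a genuine metric, both of which rest on the compactness and disjointness of $X$ and $L$.
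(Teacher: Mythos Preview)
Your proof is correct and is essentially the same as the paper's: the paper also derives $f^L_X(b)\le f^L_X(a)+\dist^L(a,b)$ and concludes by symmetry, except that instead of invoking the triangle inequality for $\dist^L$ abstractly it unpacks it by choosing $\e$-approximate paths $\gamma_1\in\text{Path}(a,X)$ and $\gamma_2\in\text{Path}(a,b)$ and concatenating them. Since the paper explicitly leaves the verification of the metric axioms to the reader, your version simply cites what the paper's proof re-derives in line.
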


\begin{proof} 
Consider any $a,b\in\R^d$ and $\e>0$.
By definition there exists $x\in X$ and $\gamma_1\in\text{Path}(a,x)$ such that $f^L_X(a)\leq \int_{\gamma_1}\frac{dz}{f_L(z)}\leq f^L_X(a)+\frac{\e}{2}$.
 Likewise, there exists $\gamma_2\in\text{Path}(a,b)$ such that $\dist^L(a,b)\leq \int_{\gamma_2}\frac{dz}{f_L(z)}\leq \dist^L(a,b)+\frac{\e}{2}$.
By construction the concatenation of the two paths $\gamma_1*\gamma_2$ is in $\text{Path}(b,X)$ with an appropriate reparameterization, so for all $\e> 0$, $f^L_X(b)\leq \int_{\gamma_1 * \gamma_2}\frac{dz}{f_L(z)}\leq f^L_X(a)+\dist^L(a,b) + \e$ so in fact $f^L_X(b) \leq f^L_X(a)+\dist^L(a,b)$.
 By symmetry we have $|f^L_X(b)-f^L_X(a)| \leq \dist^L(a,b)$.
\end{proof}

As both $f^L_X$ and $\widehat{f^L_X}$ are real-valued functions, they generate sublevel sets, also known as offsets in the topological data analysis.
The sublevel sets of former are the true offsets generated by the adaptive metric $\dist^L$ while the latter's are approximations of the true offsets.
\begin{definition}
For any compact set $L\subset \R^d$ and compact $X \subset \R^d\setminus L$ the adaptive \emph{$\alpha$-offsets} with respect to $\dist^L$ are
\[
A^L_X(\alpha) := \{x\in\R^d \mid f^L_X(x)\leq\alpha\}.
\]
\end{definition}

\begin{definition}
For any compact set $X\subset\R^d\setminus L$, for some compact set $L\subset\R^d$, the \emph{approximate $\alpha$-offsets} with respect to $\dist^L$ are
\[
B^L_X(\alpha) := (\widehat{f^L_X})^{-1}[0,\alpha] =\bigcup_{x\in X}\ball(x,\alpha f_L(x)).
\]
\end{definition}

Note the approximate offsets can also be expressed as the union of metric balls of varying radii.
One of the basic goals of this work is to relate the adaptive offsets with respect to $X$ and $L$ to the approximate offsets with respect to $\Xhat$ and $\Lhat$ via topological interleaving, where $\Lhat$ and $\Xhat$ are samples of $L$ and $X$ respectively.
Once this achieved one may apply the Persistent Nerve Lemma to the interleaving as each $B^{\Lhat}_{\Xhat}(\alpha)$ at each scale is a collection of Euclidean balls, leading to a computational homology inference method.

We can extend an adaptive metric $\dist^L$  to the associated Hausdorff distance, which is a measure of the dissimilarity between two subsets of the metric space are.
\begin{definition}
The \emph{Hausdorff distance} between two compact sets $X, Y\in (\R^d\setminus L,\dist^L)$ is defined as 
\[
\dist^L_H(X,Y):=\max\{\max_{x\in X}f^L_Y(x), \max_{y\in Y}f^L_X(y)\}
\]
\end{definition}

The Hausdorff distance between a space and a sample is a measure of the quality of the sample, namely the uniformity of it with respect to the space it is sampled from.
In this paper when we say a uniform sample, we mean one that is Hausdorff-close to the original space.
By assuming a bound on the Hausdorff distance between a compact set $X$ and a point sample $\Xhat$, we can provide containments between the offsets generated by $f_{\Xhat}^L$ and $f_X^L$ for particular scales.
The constructed symmetric relationship between the offsets over all scales is an example of a so-called filtration interleaving.

\begin{lemma}\label{lem: AX_interleaving}
Consider compact $\Xhat, X\subseteq \R^d\setminus L$ be such that $\dist_H^L(\Xhat,X)\leq \delta$.
 Then for all $\alpha\geq 0$,  $A_X^L(\alpha)\subseteq  A_{\Xhat}^L (\alpha+\delta)$ and $A_{\Xhat}^L(\alpha)\subseteq A_X^L(\alpha+\delta)$.
 
\end{lemma}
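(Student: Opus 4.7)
The plan is to prove both containments by a direct triangle-inequality argument, leveraging only two facts: that $d^L$ is a metric (so it satisfies the triangle inequality on $\mathbb{R}^d \setminus L$), and that compactness of $X$ and $\widehat{X}$ together with continuity of $d^L(y, \cdot)$ (which follows from Lemma~\ref{lem:lipschitz}) ensures the relevant minima are attained. The two containments are symmetric, so I would prove one and then remark that the other follows by swapping the roles of $X$ and $\widehat{X}$.

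For the forward direction, I would fix $y \in A_X^L(\alpha)$, so that $f_X^L(y) \leq \alpha$. By compactness of $X$, there exists $x \in X$ realizing the minimum, i.e.\ $d^L(y, x) = f^L_X(y) \leq \alpha$. The hypothesis $\dist_H^L(\widehat{X}, X) \leq \delta$ gives in particular that $f_{\widehat{X}}^L(x) \leq \delta$, and compactness of $\widehat{X}$ produces an $\widehat{x} \in \widehat{X}$ with $d^L(x, \widehat{x}) \leq \delta$. Applying the triangle inequality for $d^L$ yields
\[
f_{\widehat{X}}^L(y) \leq d^L(y, \widehat{x}) \leq d^L(y, x) + d^L(x, \widehat{x}) \leq \alpha + \delta,
\]
so $y \in A_{\widehat{X}}^L(\alpha + \delta)$, as desired.

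The reverse inclusion $A_{\widehat{X}}^L(\alpha) \subseteq A_X^L(\alpha + \delta)$ is obtained by the identical argument with the roles of $X$ and $\widehat{X}$ interchanged, using the other half of the Hausdorff bound, namely $\max_{\widehat{x} \in \widehat{X}} f_X^L(\widehat{x}) \leq \delta$.

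There is no real obstacle here; the only mild subtlety is justifying that the minima in the definitions of $f_X^L$ and $f_{\widehat{X}}^L$ are attained, since $d^L$ itself is defined as an infimum over paths and need not be realized by any particular path. If one wanted to avoid invoking attainment of the path infimum, an $\varepsilon$-argument works just as well: pick $x \in X$ with $d^L(y,x) \leq f_X^L(y) + \varepsilon/2$ and $\widehat{x} \in \widehat{X}$ with $d^L(x,\widehat{x}) \leq \delta + \varepsilon/2$, conclude $f_{\widehat{X}}^L(y) \leq \alpha + \delta + \varepsilon$ for every $\varepsilon > 0$, and let $\varepsilon \to 0$. Either formulation completes the proof in a few lines.
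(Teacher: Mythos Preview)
Your proof is correct and essentially identical to the paper's. The only cosmetic difference is that the paper invokes Lemma~\ref{lem:lipschitz} (the $1$-Lipschitz property of $f_{\Xhat}^L$) to get $f_{\Xhat}^L(y)\le f_{\Xhat}^L(x)+\dist^L(x,y)$, whereas you unpack that step by choosing a near-realizer $\widehat{x}\in\Xhat$ and applying the triangle inequality for $\dist^L$ directly; these are the same argument, and your $\varepsilon$-variant is a fine way to sidestep any worry about attainment.
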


\begin{proof}
Fix $y\in A_X^L(\alpha)$.
By definition $f^L_X(y)\leq\alpha$, which implies that there exists $x\in X$ such that $\dist^L(x,y)\leq\alpha$. 
$\dist_H^L(\Xhat,X)\leq\delta$ which implies that for all $x\in X$, $f^L_{\Xhat}(x)\leq \delta$.
 Now by Lemma~\ref{lem:lipschitz}, $f^L_{\Xhat}(y)\leq f^L_{\Xhat}(x)+\dist^L(x,y)\leq\delta+\alpha$, implying $y\in A_{\Xhat}^L (\alpha+\delta)$.
 By a symmetric argument, the latter assertion holds.
\end{proof}

In contrast to the aforementioned uniform samples with respect to the adaptive metric, the following definition describes our realization of adaptive sampling with respect to the Euclidean metric.

\begin{definition}
Given compact set $L\subset \R^d$ and compact sets $\Xhat\subset X\subset \R^d\setminus L$, we say that $\Xhat$ is an \emph{$\e$-sample} of $X$, for $\e\in[0,1)$, if for all $x\in X$, there exists $p\in\Xhat$ such that $\|x-p\|\leq\e f_L(x)$.

\end{definition}

Recall that the local feature size at a point on a manifold is the distance from it to the manifold's medial axis--the closure of the collection of points with more than one closest point to the manifold.
Our definition of an $\e$-sample is a generalization of the original notion of an adaptive sample defined by Amenta and Bern~\cite{amenta98new}, which only considers $L$ to be the medial axis of $X$, and also presumes that $X$ is a manifold rather than just a compact set.

Also, note that the relation between an $\e$-sample and the approximate offsets.
If one has an $\e$-sample $\Xhat$ of $X$, then for all $x\in X$, $\ball(x,\e f_L(x))\cap \Xhat\neq \emptyset$ implying $B^L_X(\e)\cap \Xhat \neq \emptyset $

% section background (end)
  \section{Adaptive Sampling} % (fold)
\label{sec:adaptive_sampling}

  In this section we prove that a uniform sample with respect to the adaptive metric corresponds to an adaptive sample with respect to the Euclidean metric and vice versa under mild parameter assumptions.
 The following lemma is an analogous correspondence for the case of two points between a bound on the adaptive distance between them and their proximity with respect to the Euclidean metric.
 The choice of $a$ being the central point in the two statements is chosen arbitrarily.
 Theorem~\ref{thm:adaptivesamples} is a direct result of this lemma.
  
  \begin{lemma}\label{lem:distance_pts}
    Let $L\subset\R^d$ be a compact set and consider $a,b\in\R^d\setminus L$.
    The following two statements hold for all $\delta \in [0,1)$.
    \begin{enumerate}
      \item[(i)] If $\dist^L(a,b) \le \delta$, then $\|a-b\|\le \frac{\delta}{1-\delta} f_L(a)$.
      \item[(ii)] If $\|a-b\| \leq \delta {f_L(a)}$, then $\dist^L(a,b)\le \frac{\delta}{1-\delta}$.
    \end{enumerate}
  \end{lemma}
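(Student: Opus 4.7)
The plan is to handle the two parts by complementary uses of the fact that $f_L$ is $1$-Lipschitz in the Euclidean metric (the standard property of a distance-to-set function).

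For part~(ii) I would construct an explicit cheap path and evaluate its adaptive cost directly. The natural candidate is the straight-line segment $\gamma$ from $a$ to $b$, of Euclidean length $\|a-b\| \le \delta f_L(a) < f_L(a)$. The Lipschitz property gives the uniform lower bound $f_L(z) \ge f_L(a)-\|z-a\| \ge (1-\delta)f_L(a) > 0$ at every $z$ on $\gamma$, so replacing $f_L(z)$ in the integrand by this constant lower bound yields $\dist^L(a,b) \le \int_\gamma \frac{dz}{f_L(z)} \le \frac{\|a-b\|}{(1-\delta)f_L(a)} \le \frac{\delta}{1-\delta}$, which is exactly the claim.

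For part~(i) I would turn the argument around: lower-bound $\int_\gamma \frac{dz}{f_L(z)}$ over an \emph{arbitrary} $\gamma\in\text{Path}(a,b)$, and then take the infimum. The same Lipschitz bound in the other direction gives $f_L(z) \le f_L(a) + \|z-a\| \le f_L(a) + |\gamma|$ for every $z\in\gamma$, so the integral is at least $|\gamma|/(f_L(a) + |\gamma|)$. Since the map $t \mapsto t/(f_L(a)+t)$ is monotone increasing on $[0,\infty)$ and $|\gamma| \ge \|a-b\|$, this is in turn at least $\|a-b\|/(f_L(a) + \|a-b\|)$. Taking the infimum over paths gives $\dist^L(a,b) \ge \|a-b\|/(f_L(a) + \|a-b\|)$, and the hypothesis $\dist^L(a,b)\le \delta<1$ rearranges to $\|a-b\| \le \frac{\delta}{1-\delta}f_L(a)$.

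I expect the subtle step to be in part~(i): one should avoid reasoning about a hypothetical minimizing path, which need not exist. A naive estimate that integrates the sharper pointwise bound $f_L(\gamma(s)) \le f_L(a)+s$ along arc length produces a constant of $e^\delta-1$, which is strictly worse than $\delta/(1-\delta)$. Replacing $f_L(\gamma(s))$ with the cruder but constant bound $f_L(a)+|\gamma|$, at the cost of a small weakening, is the trick that makes the algebra collapse to the matching constant $\delta/(1-\delta)$; the monotonicity step then transfers the bound from $|\gamma|$ to the straight-line distance $\|a-b\|$ without any further control of path geometry.
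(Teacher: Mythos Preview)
Your proof is correct and follows essentially the same approach as the paper: part~(ii) via the straight segment with the Lipschitz lower bound $f_L(z)\ge(1-\delta)f_L(a)$, and part~(i) via the Lipschitz upper bound $f_L(z)\le f_L(a)+|\gamma|$. The only cosmetic difference is that in~(i) you pass directly to $\|a-b\|$ using the monotonicity of $t\mapsto t/(f_L(a)+t)$ before taking the infimum, whereas the paper first bounds $|\gamma|$ for an $\varepsilon$-near-optimal path and then lets $\varepsilon\to 0$.
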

  \begin{proof}
   $(i)$ Assume $\dist^L(a,b) \le \delta$.
    Given some $\e>0$, consider $\gamma\in\text{Path}(a,b)$ such that $\dist^L(a,b)\leq \int_{\gamma} \frac{dz}{f_L(z)}\leq \dist^L(a,b) +\e$.
    Note that as $\|a-b\|$ is the length of the shortest path between $a$ and $b$ in the Euclidean metric, $\|a-b\|\leq |\gamma|$.
    We then have the following inequalities resulting from this fact and $f_L$ being $1$-Lipschitz.
    \begin{align*}
      |\gamma|=\int_{\gamma}dz 
      &= (f_L(a)+|\gamma|)\int_{\gamma}\frac{dz}{f_L(a)+|\gamma|}\\
      &\leq (f_L(a)+|\gamma|)\int_{\gamma}\frac{dz}{f_L(z)}\\
      % & = (f_L(a)+|\gamma|)\dist^L(a,b)\\
      &\leq (f_L(x)+|\gamma|)(\delta+\e).
    \end{align*}
  Considering $\e$ going to $0$ and rearranging the resulting inequality we have that $|\gamma| \leq \frac{\delta}{1-\delta}f_L(x)$.
     We then conclude that $\|a-b\|\leq |\gamma| \leq \frac{\delta}{1-\delta}f_L(x)$.
    
     $(ii)$ Assume $\|a-b\|\le \delta {f_L(a)} $.
    For all points $z$ in the straight line segment $\overline{ab}$  $\|a-z\| \leq \|a-b\|$ so the following chain of inequalities holds, 
    \[
      f_L(z)\geq f_L(a) - \|a-z\| \geq f_L(a)-\|a-b\| \geq (1-\delta) f_L(a),
    \]
    leading to the desired inequality.
    \begin{align*}
      \dist^L(a,b) 
        &=\inf_{\gamma\in\text{Path}(a,b)}\int_{\gamma}\frac{dz}{f_L(z)}\\
        &\leq \int_{\overline{ab}}\frac{dz}{f_L(z)}\\
        &\leq\frac{1}{(1-\delta)f_L(a)}\int_{\overline{ab}} dz\\
        &=\frac{\|a-b\|}{(1-\delta)f_L(a)} \\
        &\leq \frac{\delta}{1-\delta}.
    \end{align*}
  \end{proof}
  
  This lemma directly leads to our theorem relating adaptive samples in the Euclidean metric to uniform samples in the adaptive metric $\dist^L$ with respect to some compact set $L$.
  
  \begin{theorem}\label{thm:adaptivesamples}
    Let $L$ and $X$ be compact sets, let $\Xhat\subset X$ be a sample, and let $\e\in [0,1)$ be a constant.
    If $\Xhat$ is an $\e$-sample of $X$ with respect to the distance to $L$, then $\dist_H^L(X,\Xhat)\le \frac{\e}{1-\e}$.
    Also, if $\dist_H^L(X,\Xhat)\le \e<\frac{1}{2}$, then $\Xhat$ is an $\frac{\e}{1-\e}$-sample of $X$ with respect to the distance to $L$.
  \end{theorem}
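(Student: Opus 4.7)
The plan is to reduce both implications to the pointwise correspondence already established in Lemma~\ref{lem:distance_pts}, using the definition of the Hausdorff distance $\dist_H^L$ together with the inclusion $\Xhat \subseteq X$. In particular, $\Xhat \subseteq X$ immediately kills the $\max_{p \in \Xhat} f_X^L(p)$ term in $\dist_H^L(X,\Xhat)$, reducing the Hausdorff bound to controlling $\max_{x \in X} f_{\Xhat}^L(x)$, and conversely reduces the $\e$-sample condition to finding a nearby witness in $\Xhat$ for each $x \in X$. So in both directions the work is a fixed-$x$ application of the lemma.

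For the forward direction, I would fix an arbitrary $x \in X$ and use the $\e$-sample hypothesis to obtain $p \in \Xhat$ with $\|x-p\| \leq \e f_L(x)$. Applying Lemma~\ref{lem:distance_pts}(ii) with $a=x$, $b=p$, $\delta = \e$ yields $\dist^L(x,p) \leq \frac{\e}{1-\e}$, so $f_{\Xhat}^L(x) \leq \frac{\e}{1-\e}$. Taking the supremum over $x \in X$ and combining with $\max_{p \in \Xhat} f_X^L(p) = 0$ gives the desired Hausdorff bound. For the reverse direction, fix $x \in X$; the hypothesis gives $f_{\Xhat}^L(x) \leq \e$, and compactness of $\Xhat$ together with continuity of $p \mapsto \dist^L(x,p)$ produces $p \in \Xhat$ attaining $\dist^L(x,p) \leq \e$. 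Applying Lemma~\ref{lem:distance_pts}(i) with $\delta = \e$ gives $\|x-p\| \leq \frac{\e}{1-\e} f_L(x)$, which is exactly the $\frac{\e}{1-\e}$-sample condition.

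The only genuine subtlety is the parameter bookkeeping: Lemma~\ref{lem:distance_pts} requires $\delta \in [0,1)$, which is automatic for the first implication since $\e < 1$ is already assumed. For the second implication, we additionally need the output parameter $\frac{\e}{1-\e}$ to lie in $[0,1)$ so that it is a valid adaptive sampling parameter, and this is precisely what the stronger hypothesis $\e < \frac{1}{2}$ enforces. If I wanted to avoid invoking attainment of the minimum in $f^L_{\Xhat}$, I could instead pick, for each $\eta > 0$, a point $p \in \Xhat$ with $\dist^L(x,p) \leq \e + \eta$, apply the lemma, and let $\eta \to 0$; this gives the same bound while only using the infimum definition of $\dist^L$.
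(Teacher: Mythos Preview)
Your proposal is correct and follows essentially the same approach as the paper: both directions are reduced to the pointwise Lemma~\ref{lem:distance_pts}, using $\Xhat\subseteq X$ to eliminate one side of the Hausdorff distance, and the constraint $\e<\tfrac{1}{2}$ is used exactly as you identify to keep $\tfrac{\e}{1-\e}<1$. Your remarks on compactness/attainment and the $\eta\to 0$ workaround are extra care that the paper omits, but the argument is otherwise the same.
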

  \begin{proof}
  If $\Xhat$ is an $\e$-sample of $X$, given $x\in X$, there exists $p\in\Xhat$ such that $\|x-p\|\leq \e f_L(x)$.
  By Lemma~\ref{lem:distance_pts}, $\dist^L(x,p)\leq \frac{\e}{1-\e}$, so for all $x\in X$, $f^L_{\Xhat}(x)\leq\frac{\e}{1-\e}$.
  As $\Xhat\subseteq X$, this proves $\dist^L_H(\Xhat, X)\leq\frac{\e}{1-\e}$.
  
  If $\dist^L_H(\Xhat, X)\leq\e<\frac{1}{2}$ then for all $x\in X$, $f^L_{\Xhat}(x)\leq\e$, thus there exists $p\in \Xhat$ such that $\dist^L(x,p)\leq\e$, and thus by Lemma~\ref{lem:distance_pts}, $\|x-p\|\leq\frac{\e}{1-\e}f_L(x)$. 
  As $\e<\frac{1}{2}$, $\frac{\e}{1-\e}<1$, so $\Xhat$ is an $\frac{\e}{1-\e}$-sample of $X$.

  \end{proof}

% section adaptive_sampling (end)
  \section{Interleavings} % (fold)
\label{sec:interleaving}

  A \emph{filtration} is an increasing sequence of sets or topological spaces $\mathcal{F}= (F^\alpha)_{\alpha\geq 0}$ where for all $\alpha$, $F(\alpha)\subset \R^d$ and $F(\alpha)\subseteq F(\beta)$ iff $\alpha\leq \beta$.
   We specifically consider filtrations that are generated by the sub-level sets of a real-valued function $f:\R^d\to \R$, i.e. the filtration $\mathcal{F}$ whose set at scale $\alpha$ is  defined as
  \[
    F(\alpha) := \{x\in \R^d \mid f(x)\le \alpha\} = f^{-1}[0,\alpha].
  \]
Interleavings provide a concrete relationship between two filtrations' sets.
  In topological data analysis, researchers primarily focus on or utilize interleavings that are symmetric, ones where $h_1=h_2$, and are interleaved over either the intervals $(-\infty,\infty)$, $(0,\infty)$ or $[0,\infty)$.
  Relaxing what is considered an interleaving allows us to define specific sampling conditions for which the homology inference result is valid.
The following is a generalization of the standard notion of a (symmetric) interleaving, in which we allow for asymmetry and restrictions of the intervals over which the filtrations' elements are interleaved.

  \begin{definition}\label{def:interleaved}
    A pair of filtrations \emph{$(\mathcal{F},\mathcal{G})$ is $(h_1, h_2)$-interleaved} on an interval $(s,t)$ if $F(r)\subseteq G(h_1(r))$ whenever $r,h_1(r)\in (s,t)$ and $G(r)\subseteq F(h_2(r))$ whenever $r,h_2(r)\in (s,t)$. 
    We require that the functions $h_1,h_2$ be non-decreasing over the interval $(s,t)$.
  \end{definition}
  
    Proving the existence of an interleaving between filtrations, explicitly or implicitly, is used in topological data analysis to provide insight into the topological and geometric differences (and similarities) between the filtrations when one filtration is generated by a particularly nice function.
    This idea, as it applies to this work, will be expanded upon and utilized in Section~\ref{sec:smoothing_adaptive} by considering the smoothing of $f_X^L$.

The following lemma gives us an iterative way to combine pairs of interleavings over the intersections of their interleaving intervals and will ultimately be used to construct the desired relationship.
  
  \begin{lemma}\label{lem:composinginterleavings}
    If $(\mathcal{F},\mathcal{G})$ is $(h_1,h_2)$-interleaved on $(s_1,t_1)$, and $(\mathcal{G},\mathcal{H})$ is $(h_3,h_4)$-interleaved on $(s_2,t_2)$, then $(\mathcal{F},\mathcal{H})$ is $(h_3\circ h_1,h_2\circ h_4)$-interleaved on $(s_3, t_3)$, where $s_3 = \max\{s_1,s_2\}$ and  $t_3 = \min\{t_1,t_2\}$.
  \end{lemma}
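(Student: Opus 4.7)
The plan is to unpack Definition~\ref{def:interleaved} for the two given interleavings and chain them at a carefully chosen intermediate scale. Fix an arbitrary $r$ with both $r$ and $(h_3\circ h_1)(r)$ in $(s_3,t_3) = (s_1,t_1)\cap (s_2,t_2)$. To apply the first interleaving at $r$, one needs $h_1(r)\in(s_1,t_1)$; to then apply the second interleaving at $h_1(r)$, one needs $h_1(r)\in(s_2,t_2)$ and $h_3(h_1(r))\in(s_2,t_2)$. So my first step is to verify that the intermediate value $h_1(r)$ actually lies in $(s_3,t_3)$, which handles both requirements simultaneously.

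Once this positioning lemma is in hand, the rest is bookkeeping: apply the $(h_1,h_2)$-interleaving on $(s_1,t_1)$ to get $F(r)\subseteq G(h_1(r))$, then apply the $(h_3,h_4)$-interleaving on $(s_2,t_2)$ to the same intermediate scale to get $G(h_1(r))\subseteq H(h_3(h_1(r)))$, and concatenate to obtain $F(r)\subseteq H((h_3\circ h_1)(r))$. The reverse containment is proved by a mirror-image argument: given $r$ and $(h_2\circ h_4)(r)$ in $(s_3,t_3)$, pass through the intermediate scale $h_4(r)$, use the second interleaving to get $H(r)\subseteq G(h_4(r))$, then use the first to get $G(h_4(r))\subseteq F((h_2\circ h_4)(r))$. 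Finally, the monotonicity requirement on shift functions from Definition~\ref{def:interleaved} is preserved under composition: $h_3\circ h_1$ and $h_2\circ h_4$ are non-decreasing on $(s_3,t_3)$ because each factor is non-decreasing on its respective interval containing $(s_3,t_3)$.

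The main (and really only) obstacle is the intermediate-scale step -- showing $h_1(r)\in(s_3,t_3)$ and $h_4(r)\in(s_3,t_3)$. Monotonicity of $h_1,h_3$ by itself does not pin down where $h_1(r)$ sits relative to $r$ and $h_3(h_1(r))$; one needs the usual convention implicit throughout topological data analysis that shift functions in an interleaving satisfy $h(r)\ge r$ (so that the inclusions $F(r)\subseteq G(h_1(r))$ etc.\ genuinely relax rather than tighten the scale). Under that convention, $r\le h_1(r)\le h_3(h_1(r))$, and since both endpoints of this chain lie in $(s_3,t_3)$, so does $h_1(r)$; the analogous sandwich $r\le h_4(r)\le h_2(h_4(r))$ places $h_4(r)$ in the interval for the symmetric direction. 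With this resolved, the lemma reduces to a direct composition of set inclusions.
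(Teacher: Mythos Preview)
Your proof is correct and follows the same route as the paper: chain the inclusions $F(r)\subseteq G(h_1(r))\subseteq H(h_3(h_1(r)))$, and symmetrically for the other direction. You are in fact more careful than the paper, which simply asserts the chain without checking that the intermediate scale $h_1(r)$ (resp.\ $h_4(r)$) lies in the relevant interval; your sandwich $r\le h_1(r)\le h_3(h_1(r))$ under the standing convention $h(r)\ge r$ supplies exactly that missing justification.
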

  \begin{proof}
    Given $r, h_3(h_1(r))\in (s_3,t_3)$,  we have $F(r)\subseteq G(h_1(r))\subseteq H(h_3(h_1(r)))$.
    Similarly, given $r, h_2(h_4(r))\in (s_3,t_3)$, we have $H(r)\subseteq G(h_4(r)) \subseteq F(h_2(h_4(r)))$.
  \end{proof}

    For the rest of this section, let $\Lhat\subseteq L\subset \R^d$ and $\Xhat\subseteq X \subset \R^d\setminus L$ be compact sets, with $\Lhat$ and $\Xhat$ representing samples of $L$ and $X$ respectively. 
    The desired relationship between the adaptive offset filtration $\mathcal{A}_X^L$ and the approximate offset filtration $\mathcal{B}_{\Xhat}^{\Lhat}$ will be provided by an interleaving that is built up by multiple applications of Lemma~\ref{lem:composinginterleavings} to the interleavings constructed in the remainder of this section.
  
  \subsection{Approximating $X$ with $\Xhat$} % (fold)
  \label{sub:approximating_x_with_xhat}
  
    \begin{lemma}\label{lem:XtoXhat}
      If $\dist_H^L(\Xhat,X)\le\e$, then $(\mathcal{A}_X^L, \mathcal{A}_{\Xhat}^L)$ is $(h_1, h_1)$-interleaved on $(0, \infty)$, where $h_1(r) = r + \e$.
    \end{lemma}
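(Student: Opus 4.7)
The plan is to unfold the definition of $(h_1, h_1)$-interleaving on $(0, \infty)$ and check that the two required containments are provided immediately by Lemma~\ref{lem: AX_interleaving}, which has already done the substantive work for individual scale parameters. So this lemma is essentially a repackaging of that earlier containment result in the language of filtrations.

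First I would verify that $h_1(r) = r + \e$ is non-decreasing on $(0, \infty)$, which is trivial since it is a translation. Then, by Definition~\ref{def:interleaved}, I need to show that whenever $r, h_1(r) \in (0,\infty)$ it holds that $A_X^L(r) \subseteq A_{\Xhat}^L(h_1(r))$, and symmetrically $A_{\Xhat}^L(r) \subseteq A_X^L(h_1(r))$. Applying Lemma~\ref{lem: AX_interleaving} with $\delta = \e$ and scale parameter $\alpha = r$ yields exactly $A_X^L(r) \subseteq A_{\Xhat}^L(r+\e)$ and $A_{\Xhat}^L(r) \subseteq A_X^L(r+\e)$, which rewrite as the needed inclusions at scale $h_1(r)$. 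Both inclusions hold for every $r \geq 0$, so in particular for every $r$ with $r, h_1(r) \in (0,\infty)$.

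There is no real obstacle here; the only thing to be careful about is matching the quantification in Definition~\ref{def:interleaved} (which restricts attention to $r$ with $r, h_1(r)$ in the interleaving interval) against the unrestricted $\alpha \geq 0$ quantifier in Lemma~\ref{lem: AX_interleaving}. Since $h_1$ shifts $(0, \infty)$ into itself, this compatibility is automatic and the proof reduces to a direct citation.
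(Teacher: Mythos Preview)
Your proposal is correct and matches the paper's own proof, which simply states that this lemma is Lemma~\ref{lem: AX_interleaving} rephrased in the interleaving notation. You have added a little more explicit detail (checking $h_1$ is non-decreasing and that the interval condition is automatic), but the approach is identical.
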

    \begin{proof}
      This lemma is identical to Lemma~\ref{lem: AX_interleaving} expressed in our interleaving notation.
    \end{proof}

  % subsection approximating_x_with_xhat (end)

  \subsection{Approximating the Adaptive Metric} % (fold)
  \label{sub:approximating_the_induced_metric}
    Next we show that we may reasonably approximate the sublevel sets of $f_X^L$ by Euclidean balls, which are much easier to work with than arbitrary sublevel sets, particularly when computing intersections.
    These results may be viewed as an extension of the adaptive sampling results of the previous section, Lemma~\ref{lem:distance_pts} and Theorem~\ref{thm:adaptivesamples}.
    
    \begin{lemma}\label{lem: AX_BX_interleaving}
Given compact set $L\subset \R^d$, and compact set $X\subset\R^d\setminus L$, for $r\in[0,1)$, $A_X^L(r)\subseteq B_X^L(\frac{r}{1-r})$, and for $r\in[0,\frac{1}{2})$, $B_X^L(r)\subseteq A_X^L(\frac{r}{1-r})$.
\end{lemma}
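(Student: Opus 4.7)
The plan is to read this lemma as the setwise translation of the pointwise Lemma~\ref{lem:distance_pts}. Both containments should follow by fixing a nearest-neighbor witness $x \in X$ and applying the appropriate part of that lemma with $x$ placed in the role of the ``central'' point $a$ whose local feature size appears on the right-hand side.

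For the first containment, I fix $y \in A_X^L(r)$ with $r \in [0,1)$, so by definition $\min_{x\in X} \dist^L(y,x) \le r$. The minimum is achieved: $\dist^L(y,\cdot)$ is $1$-Lipschitz on $\R^d\setminus L$ (by Lemma~\ref{lem:lipschitz} together with symmetry of $\dist^L$), and $X$ is compact. Pick such a minimizer $x$, so $\dist^L(x,y)\le r$, and invoke Lemma~\ref{lem:distance_pts}(i) with $a:=x$, $b:=y$, $\delta:=r$. The conclusion $\|y-x\|\le \frac{r}{1-r}\,f_L(x)$ places $y$ inside $\ball(x,\tfrac{r}{1-r}f_L(x))$, which by the union-of-balls description of $B_X^L$ is a subset of $B_X^L(\tfrac{r}{1-r})$.

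For the second containment, I fix $y \in B_X^L(r)$ with $r \in [0,\tfrac{1}{2})$. Directly from the definition of $B_X^L$ as a union of balls, there exists $x \in X$ with $\|y-x\|\le r\,f_L(x)$. Applying Lemma~\ref{lem:distance_pts}(ii) with $a:=x$, $b:=y$, $\delta:=r$ gives $\dist^L(y,x)\le \frac{r}{1-r}$, so $f_X^L(y)\le \frac{r}{1-r}$, i.e.\ $y\in A_X^L(\tfrac{r}{1-r})$. The restriction $r<\tfrac{1}{2}$ is exactly what is needed so that $\tfrac{r}{1-r}<1$ and part (ii) of Lemma~\ref{lem:distance_pts} applies.

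There is essentially no obstacle, only one bookkeeping point worth naming: the asymmetry of Lemma~\ref{lem:distance_pts} comes entirely from $f_L$ being evaluated at the central point $a$, whereas $\dist^L$ itself is symmetric. To make the conclusions match the definitions of $A_X^L$ and $B_X^L$, the witness $x\in X$ must be taken as $a$ in both applications, so that the factor $f_L(x)$ aligns with the $f_L(x)$ appearing in the radii defining $B_X^L$.
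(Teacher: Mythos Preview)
Your proof is correct and matches the paper's own argument essentially line for line: both containments are obtained by choosing a witness $x\in X$ and invoking the corresponding part of Lemma~\ref{lem:distance_pts} with $a=x$. Your version is in fact slightly more careful in justifying attainment of the minimum and in making explicit which point plays the role of $a$; one small quibble is that the restriction $r<\tfrac12$ is not what licenses part~(ii) of Lemma~\ref{lem:distance_pts} (that only needs $\delta<1$), but rather just ensures $\tfrac{r}{1-r}<1$ so the resulting scale stays in the intended range.
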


\begin{proof}
Consider $r\in [0,1)$ and $y\in A_X^L(r)$ such that $f_X^L(y)\leq r$.
By definition there exists $x\in X$ such that $\dist^L(x,y)\leq r$.
By Lemma~\ref{lem:distance_pts}, this implies that $\|x-y\|\leq\frac{r}{1-r}f_L(x)$, which implies that $y\in B_X^L(\frac{r}{1-r})$.

Now consider $r\in[0,\frac{1}{2})$ and $y\in B_X^L(r)$.
By definition, $y\in\ball(x,r f_L(x))$ for some $x\in X$ so $\|x-y\|\leq r f_L(x)$.
Applying Lemma~\ref{lem:distance_pts}, we have then have that $\dist^L(x,y)\leq \frac{r}{1-r}$, and as $f_X^L(y)\leq\dist^L(x,y)$, $y\in A_X^L(\frac{r}{1-r})$.
\end{proof}

    \begin{corollary}\label{cor:AtoB}
      The pair $(\mathcal{A}_{\Xhat}^L, \mathcal{B}_{\Xhat}^L)$ are $(h_2, h_2)$-interleaved on $(0,\frac{1}{2})$, where $h_2(r) = \frac{r}{1-r}$.
    \end{corollary}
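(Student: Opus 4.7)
The plan is to derive this corollary directly from Lemma~\ref{lem: AX_BX_interleaving} by instantiating that lemma with $\Xhat$ in place of $X$ and then verifying that the two containments it produces fit the asymmetric interleaving definition in Definition~\ref{def:interleaved} on the interval $(0,\tfrac{1}{2})$.

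First I would observe that $\Xhat \subseteq \R^d \setminus L$ is compact, so Lemma~\ref{lem: AX_BX_interleaving} applied with $\Xhat$ in the role of $X$ yields $A_{\Xhat}^L(r) \subseteq B_{\Xhat}^L(h_2(r))$ for every $r \in [0,1)$ and $B_{\Xhat}^L(r) \subseteq A_{\Xhat}^L(h_2(r))$ for every $r \in [0,\tfrac{1}{2})$, where $h_2(r) = \tfrac{r}{1-r}$. Next I would check that $h_2$ is non-decreasing on $(0,\tfrac{1}{2})$, which is immediate from the derivative $h_2'(r) = (1-r)^{-2} > 0$, so the monotonicity requirement of Definition~\ref{def:interleaved} is met.

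It then remains to confirm that the interval conditions are satisfied. For the first inclusion $A_{\Xhat}^L(r) \subseteq B_{\Xhat}^L(h_2(r))$, the hypothesis of Definition~\ref{def:interleaved} requires $r, h_2(r) \in (0,\tfrac{1}{2})$; in particular $r \in (0,\tfrac{1}{2}) \subseteq [0,1)$, so Lemma~\ref{lem: AX_BX_interleaving} applies. For the second inclusion $B_{\Xhat}^L(r) \subseteq A_{\Xhat}^L(h_2(r))$, the same hypothesis gives $r \in (0,\tfrac{1}{2}) \subseteq [0,\tfrac{1}{2})$, again matching the lemma's range.

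I do not anticipate any real obstacle: the content of the corollary is entirely captured by Lemma~\ref{lem: AX_BX_interleaving}, and the proof is essentially a translation into the interleaving formalism together with the monotonicity check on $h_2$.
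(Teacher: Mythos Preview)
Your proposal is correct and follows essentially the same approach as the paper, which simply states that the corollary follows from Lemma~\ref{lem: AX_BX_interleaving} rewritten in the interleaving notation. You have merely spelled out the translation (instantiating with $\Xhat$, checking the monotonicity of $h_2$, and verifying the interval conditions) in more detail than the paper does.
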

    \begin{proof}
    This follows from considering Lemma~\ref{lem: AX_BX_interleaving} with respect to the interleaving notation.

    \end{proof}

  % subsection approximating_the_induced_metric (end)

  \subsection{Approximating $L$ with $\Lhat$} % (fold)
  \label{sub:approximating_l_with_lhat}

    A landmark set $L$ is often only approximate-able as it is frequently dependent on $X$ whose shape is object of interest in the first place.
    One may only be able to construct a finite point set sampled from it.
    For example, in the case where $L$ is the medial axis of $X$ there are several known techniques for approximating $L$, e.g. taking some vertices of the Voronoi diagram~\cite{amenta98new,bookDey2007}.
    By default this uncertainty prevents accurate evaluation of $f_L$ and by extension $\dist^L$.
    We would like to provide some sampling conditions that allow us to reasonably infer information about $L$, and said functions, by only looking at a sample $\Lhat$.
    
    Interestingly, the sampling conditions we use for $\Xhat$ are dual to those used for $\Lhat$.
    Specifically we assume an upper-bound on $\dist_H^{\Xhat}(L,\Lhat)$, or alternatively by  Theorem~\ref{thm:adaptivesamples}, $\Lhat$ must be an adaptive sample of $L$ with respect to the distance to $\Xhat$.
  
    \begin{lemma}\label{lem:LtoLhat}
      If $\dist_H^{\Xhat}(L,\Lhat)\le \delta< 1$, then $(\mathcal{B}_{\Xhat}^L, \mathcal{B}_{\Xhat}^{\Lhat})$ is $(h_3,h_3)$-interleaved on $(0,\infty)$, where $h_3(r) = \frac{r}{1-\delta}$.    
    \end{lemma}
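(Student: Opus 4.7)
The plan is to unfold the definition of $B^L_{\Xhat}$ and reduce the interleaving claim to a pointwise comparison between $f_L$ and $f_{\Lhat}$ on the sample $\Xhat$. Specifically, both containments will follow once we establish that for every $x\in\Xhat$,
\[
(1-\delta)\, f_{\Lhat}(x) \;\le\; f_L(x) \;\le\; f_{\Lhat}(x).
\]
Indeed, if this holds, then a point $y$ lying in a ball $\mathrm{ball}(x, r f_L(x))$ automatically lies in $\mathrm{ball}(x, r f_{\Lhat}(x)) \subseteq \mathrm{ball}(x, h_3(r) f_{\Lhat}(x))$, giving $B^L_{\Xhat}(r)\subseteq B^{\Lhat}_{\Xhat}(h_3(r))$, and symmetrically a point in $\mathrm{ball}(x, r f_{\Lhat}(x))$ lies in $\mathrm{ball}(x, \tfrac{r}{1-\delta} f_L(x)) = \mathrm{ball}(x, h_3(r) f_L(x))$, giving $B^{\Lhat}_{\Xhat}(r)\subseteq B^L_{\Xhat}(h_3(r))$.

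The upper inequality $f_L(x)\le f_{\Lhat}(x)$ is immediate from the standing assumption $\Lhat\subseteq L$, which makes the easy direction of the interleaving trivial (and in fact stronger than $h_3$ requires). The real work is the lower inequality, and this is where the Hausdorff hypothesis enters. First I would fix $x\in\Xhat$ and pick a nearest point $\ell^\star\in L$ to $x$ so that $\|x-\ell^\star\| = f_L(x)$. The assumption $\dist_H^{\Xhat}(L,\Lhat)\le \delta$ then yields some $\hat\ell\in\Lhat$ with $\dist^{\Xhat}(\ell^\star,\hat\ell)\le \delta$. Applying Lemma~\ref{lem:distance_pts}(i) in the adaptive metric $\dist^{\Xhat}$ (which is valid because $\delta<1$) converts this into a Euclidean bound:
\[
\|\ell^\star - \hat\ell\| \;\le\; \frac{\delta}{1-\delta}\, f_{\Xhat}(\ell^\star).
\]

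The key observation to finish is that $f_{\Xhat}(\ell^\star)\le \|\ell^\star - x\| = f_L(x)$, since $x\in\Xhat$. Combining with the triangle inequality,
\[
f_{\Lhat}(x) \;\le\; \|x-\hat\ell\| \;\le\; \|x-\ell^\star\| + \|\ell^\star-\hat\ell\| \;\le\; f_L(x) + \frac{\delta}{1-\delta}\, f_L(x) \;=\; \frac{f_L(x)}{1-\delta},
\]
which rearranges to $(1-\delta) f_{\Lhat}(x)\le f_L(x)$, as required. The main obstacle is really just to identify the correct object to feed into Lemma~\ref{lem:distance_pts}; once we note that $\ell^\star$ itself is close to $\Xhat$ (because $x$ witnesses a short Euclidean distance), the conversion from the adaptive Hausdorff hypothesis to a Euclidean multiplicative bound on $f_{\Lhat}/f_L$ is straightforward, and the interleaving follows from the reduction in the first paragraph with no further case analysis, valid on all of $(0,\infty)$.
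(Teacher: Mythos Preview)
Your proof is correct and follows essentially the same route as the paper's: both reduce the interleaving to a multiplicative comparison between $f_L$ and $f_{\Lhat}$ on points of $\Xhat$, obtained by converting the adaptive Hausdorff hypothesis into a Euclidean bound via Lemma~\ref{lem:distance_pts} and then applying the triangle inequality. The only twist is that you exploit the standing assumption $\Lhat\subseteq L$ to make the inclusion $B^L_{\Xhat}(r)\subseteq B^{\Lhat}_{\Xhat}(r)$ immediate (so your ``hard'' direction is the one the paper leaves to symmetry), whereas the paper argues the inequality $f_L(p)\le \tfrac{1}{1-\delta}f_{\Lhat}(p)$ directly---starting from the nearest $\Lhat$-point rather than the nearest $L$-point---and then invokes symmetry for the remaining containment.
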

    \begin{proof}
      Begin with arbitrary $r\in (0,\infty)$ and $x\in B_{\Xhat}^L(r)$.
      There is a point $p\in \Xhat$ such that $\frac{\|x-p\|}{f_L(p)}\le r$ and there is also a closest point $z\in \Lhat$ to $p$, because $\Lhat$ is compact, so that $f_{\Lhat}(p) = \|p-z\|$.
      Theorem~\ref{thm:adaptivesamples} and our assumption that $\dist_H^{\Xhat}(L,\Lhat)\le \delta$ together imply that there exists $y\in L$ such that 
      \begin{equation}\label{eq:LtoLhat1}
        \|y-z\| \le \frac{\delta}{1-\delta}f_{\Xhat}(z).
      \end{equation}
      We also know by the definition of the distance-to-set function that
      \begin{equation}\label{eq:LtoLhat2}
        f_{\Xhat}(z)
        = \min_{q\in \Xhat} \|z-q\|
        \le \|z-p\|
        = f_{\Lhat}(p),
      \end{equation}
      so we can relate $f_L(p)$ to $f_{\Lhat}(p)$ as follows
      \begin{align*}
        f_L(p)
          &\le \|y-p\|\because{$y\in L$}\\
          &\le \|y-z\| + \|z-p\|\because{triangle inequality}\\
          % &\le \frac{\e}{1-\e}f_{\Xhat}(z) + \|z-p\|\because{by~\eqref{eq:LtoLhat1}}\\
          &\le \frac{1}{1-\delta}f_{\Lhat}(p).\because{by~\eqref{eq:LtoLhat1} and \eqref{eq:LtoLhat2}}.
      \end{align*}
      Collectively the following holds,
      \begin{align*}
        \frac{\|x-p\|}{f_{\Lhat}(p)}
          &\le \frac{\|x-p\|}{(1-\delta)f_L(p)}
          \le \frac{r}{1-\delta} = h_3(r),
      \end{align*}
      therefore $x\in B_{\Xhat}^{\Lhat}(h_3(r))$ so we conclude that $B_{\Xhat}^L(r) \subseteq B_{\Xhat}^{\Lhat}(h_3(r))$.
      The proof is symmetric o show that $B_{\Xhat}^{\Lhat}(r) \subseteq B_{\Xhat}^L(h_3(r))$
    \end{proof}

  % subsection approximating_l_with_lhat (end)

  \subsection{Putting it all together} % (fold)
  \label{sub:putting_it_all_together}
    We can now combine all the previous interleaving results using Lemma~\ref{lem:composinginterleavings} to arrive at our penultimate theorem which establishes an interleaving between the approximate offsets filtration for the approximate spaces to the adaptive metric offsets filtration for the true spaces.

    \begin{theorem}\label{thm:biginterleaving}
      Let $\Lhat\subseteq L\subset \R^d$ and $\Xhat\subseteq X \subset \R^d\setminus L$ be compact sets.
      If $\dist_H^{\Xhat}(L,\Lhat)\le \delta< 1$ and $\dist_H^L(\Xhat,X)\leq\e<1$, then $(\mathcal{A}_X^L, \mathcal{B}_{\Xhat}^{\Lhat})$ are $(h_4, h_5)$-interleaved on $(0,1)$, where $h_4(r) = \frac{r+\e}{(1-r-\e)(1-\delta)}$ and $h_5(r) = \frac{r}{1-\delta-r}+\e$.
    \end{theorem}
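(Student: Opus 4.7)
My plan is to chain the three interleavings established earlier in this section via two applications of Lemma~\ref{lem:composinginterleavings}. The filtrations $\mathcal{A}_X^L$, $\mathcal{A}_{\Xhat}^L$, $\mathcal{B}_{\Xhat}^L$, and $\mathcal{B}_{\Xhat}^{\Lhat}$ form a chain of consecutive interleavings with rate functions $h_1$, $h_2$, and $h_3$ supplied respectively by Lemma~\ref{lem:XtoXhat}, Corollary~\ref{cor:AtoB}, and Lemma~\ref{lem:LtoLhat}, so the composition lemma should connect the endpoints $\mathcal{A}_X^L$ and $\mathcal{B}_{\Xhat}^{\Lhat}$ with rates $h_3\circ h_2\circ h_1$ and $h_1\circ h_2\circ h_3$.

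Concretely, I would first invoke Lemma~\ref{lem:composinginterleavings} on Lemma~\ref{lem:XtoXhat} and Corollary~\ref{cor:AtoB} to obtain an $(h_2\circ h_1,\, h_1\circ h_2)$-interleaving of $\mathcal{A}_X^L$ and $\mathcal{B}_{\Xhat}^L$, and then invoke it again with Lemma~\ref{lem:LtoLhat} to arrive at the desired interleaving of $\mathcal{A}_X^L$ and $\mathcal{B}_{\Xhat}^{\Lhat}$. All that then remains is the routine simplification
\[
h_3(h_2(h_1(r))) = \frac{r+\e}{(1-r-\e)(1-\delta)} = h_4(r), \qquad h_1(h_2(h_3(r))) = \frac{r}{1-\delta-r} + \e = h_5(r).
\]

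The main bookkeeping obstacle is that Corollary~\ref{cor:AtoB} is stated on $(0,\frac{1}{2})$, whereas the theorem asserts its interleaving on $(0,1)$, and Lemma~\ref{lem:composinginterleavings} only takes the intersection of the two input intervals. I would handle this by observing that Corollary~\ref{cor:AtoB} in fact holds verbatim on $(0,1)$: the forward containment in Lemma~\ref{lem: AX_BX_interleaving} is valid for all $r \in (0,1)$, and on the enlarged interval the reverse containment is only required when $h_2(r)=\frac{r}{1-r}$ lands back in $(0,1)$, i.e., when $r<\frac{1}{2}$, which is precisely the range that Lemma~\ref{lem: AX_BX_interleaving} supplies. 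With Corollary~\ref{cor:AtoB} re-stated on $(0,1)$, both applications of the composition lemma preserve this interval, matching the theorem's conclusion.
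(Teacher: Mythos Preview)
Your proposal is correct and follows exactly the paper's approach: two applications of Lemma~\ref{lem:composinginterleavings} to chain Lemma~\ref{lem:XtoXhat}, Corollary~\ref{cor:AtoB}, and Lemma~\ref{lem:LtoLhat}, followed by simplifying $h_3\circ h_2\circ h_1$ and $h_1\circ h_2\circ h_3$. Your handling of the interval discrepancy for Corollary~\ref{cor:AtoB} is in fact more careful than the paper's own proof, which simply asserts the $(0,1)$ interval without comment (and also swaps $\e$ and $\delta$ in the displayed computation, though the theorem statement is correct).
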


\begin{proof}
Applying Lemma~\ref{lem:composinginterleavings} to the interleavings from Lemma~\ref{lem:XtoXhat} and Corollary~\ref{cor:AtoB}, we have that $(\mathcal{A}_X^L, \mathcal{B}_{\Xhat}^L)$ is $(h_2\circ h_1, h_1\circ h_2)$-interleaved on $(0,1)$.
This interleaving combined with that from Lemma~\ref{lem:LtoLhat} yields that  $(A_X^L,B_{\Xhat}^{\Lhat})$ is $(h_3\circ h_2\circ h_1, h_1\circ h_2\circ h_3)$ interleaved on $(0, 1)$.
Now we simply must compute $h_3\circ h_2\circ h_1$ and $h_1\circ h_2\circ h_3$ as follows.

\begin{align*}
(h_3\circ h_2\circ h_1)(r)=(h_3\circ h_2)(r+\delta)&=h_3(\frac{r+\delta}{1-r-\delta})\\
&=\frac{r+\delta}{(1-r-\delta)(1-\e)}\\
(h_1\circ h_2\circ h_3)(r)=(h_1\circ h_2)(\frac{r}{1-\e})&=h_1(\frac{r}{(1-\e)(1-\frac{r}{1-\e})})\\
&=h_1(\frac{r}{1-\e-r})\\
&=\frac{r}{1-\e-r}+\delta\\
\end{align*}

This computation results in our interleaving functions being $h_4(r) =\frac{r+\delta}{(1-r-\delta)(1-\e)} $ and \\$h_5(r) = \frac{r}{1-\e-r}+\delta$.
\end{proof}
  % subsection putting_it_all_together (end)

% section interleaving (end) 
  \section{Critical Points of Distance Functions} % (fold)
\label{sec:critical_points}

  Here we give a minimal presentation of the critical point theory of distance functions that motivates the need for interleavings of sublevel sets of the distance functions we consider.
 
  Given a smooth Riemannian manifold $M$ and a compact subset $X\subset M$ consider the function $f_X:M\to \R$ that maps each point in $M$ to the distance to its nearest point in $X$ as determined by the metric on the manifold.
  The gradient of $f_X$ is well-defined on $M$ and its critical points are those points on which the gradient evaluates to $0$.
  The critical values of $f_X$ are the values $r\in \R_{\geq 0}$ such that $f_X^{-1}\{r\}$ contains a critical point.
  The critical point theory of distance functions developed by Grove~\cite{grove93critical} and others extends ideas from Morse theory to these distance functions.  
  In particular, the theory provides us the following result.

  \begin{lemma}\label{lem:isotopy}
    If $[r,r']$ contains no critical values then $f_X^{-1}[0,r] \hookrightarrow f_X^{-1}[0,r']$ is a homotopy equivalence.
  \end{lemma}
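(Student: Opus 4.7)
The plan is to mimic the proof of the classical Morse-theoretic deformation lemma, but in the non-smooth setting appropriate to distance functions. Since $f_X$ is only $1$-Lipschitz (not smooth) even on a smooth Riemannian $M$, we cannot use its gradient directly; instead we follow Grove's framework and build a continuous gradient-like vector field on the annular region $f_X^{-1}[r,r']$ whose flow decreases $f_X$.

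First I would recall the non-smooth notion of a critical point: a point $p\in M\setminus X$ is regular for $f_X$ when there exists a tangent direction $v\in T_pM$ such that the directional derivative of $f_X$ along $v$ is bounded above by a strictly negative constant; equivalently, $0$ does not lie in the convex hull of the unit vectors at $p$ pointing toward the (finitely many, after a perturbation, or the compact set of) nearest points of $X$. By hypothesis every $p$ with $f_X(p)\in[r,r']$ is regular, so around each such $p$ one can pick a local smooth vector field $V_p$ defined on an open neighborhood $U_p$ along which $f_X$ decreases at a rate bounded away from zero, say by some $c_p>0$.

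Next I would globalize: cover the compact slab $f_X^{-1}[r,r']$ by finitely many such neighborhoods $U_{p_i}$, take a subordinate smooth partition of unity $\{\phi_i\}$, and form $V := \sum_i \phi_i V_{p_i}$. Convexity of the "descent" condition ensures $V$ is continuous (indeed smooth) and that $f_X$ is strictly decreasing along the integral curves of $V$, with rate bounded below by a uniform $c>0$ on $f_X^{-1}[r,r']$. Extend $V$ by a cutoff to all of $M$ so it vanishes outside a neighborhood of the slab, and multiply by a smooth bump that equals $1/\|V(\cdot)\|$ times a scalar to normalize the rate, so that along the resulting flow $\Phi_t$ one has $f_X(\Phi_t(q))=f_X(q)-t$ for $q\in f_X^{-1}[r,r']$ as long as the value stays in that interval.

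Finally, define the deformation retraction $H:f_X^{-1}[0,r']\times[0,1]\to f_X^{-1}[0,r']$ by
\[
H(q,s) = \begin{cases} q & \text{if } f_X(q)\le r,\\ \Phi_{s(f_X(q)-r)}(q) & \text{if } f_X(q)\in[r,r']. \end{cases}
\]
Continuity at the seam $f_X(q)=r$ follows from $\Phi_0=\mathrm{id}$, and $H(\cdot,1)$ lands in $f_X^{-1}[0,r]$ by construction, while fixing that subset pointwise. Hence $f_X^{-1}[0,r]\hookrightarrow f_X^{-1}[0,r']$ is a deformation retract and in particular a homotopy equivalence. The main obstacle, as expected, is the construction of the continuous descent field $V$ from purely local non-smooth data; once the uniform lower bound $c>0$ on the descent rate is established on the compact slab, the flow argument proceeds exactly as in classical Morse theory.
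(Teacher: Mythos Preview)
The paper does not supply its own proof of this lemma: it is stated as a known result imported from Grove's critical point theory of distance functions~\cite{grove93critical}, with no argument given beyond the citation. Your sketch is essentially the standard proof one finds in that literature (local descent directions at regular points, globalization by partition of unity, flow-based deformation retraction), so in that sense your approach coincides with what the paper is implicitly invoking.

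Two minor technical points worth tightening. First, you assert that the slab $f_X^{-1}[r,r']$ is compact and use this to pass to a finite subcover and to extract a uniform lower bound $c>0$ on the descent rate; this is fine in the paper's setting (compact $X$ in $\R^d$, so sublevel sets are closed and bounded), but in a general complete Riemannian manifold you would need $f_X$ to be proper, which follows for instance from completeness of $M$ together with compactness of $X$. Second, because $f_X$ is only Lipschitz, you cannot literally normalize the flow so that $f_X(\Phi_t(q))=f_X(q)-t$ exactly; what you actually get is $f_X(\Phi_t(q))\le f_X(q)-ct$ for the uniform $c>0$, and the retraction is then defined by flowing each point until it first hits the level $r$ (the hitting time is continuous by the strict monotonicity and the Lipschitz bound). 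With these adjustments the argument is complete and matches the Grove framework the paper cites.
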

  
  This implies that for intervals that don't contain critical values, the inclusion maps between elements of the filtration on those intervals are all all homotopy equivalences and therefore induce natural isomorphisms at the homology level.
  We will use this in the next section to infer information about the homology of filtrations that are interleaved with such a filtration generated by a Riemannian distance function.

% section critical_points (end)
  
\section{Smooth Adaptive Distance and Homology Inference} % (fold)
\label{sec:smoothing_adaptive}

We will now introduce the smoothed distance function $\widetilde{f_X^L}$ and use it in conjunction with the results proved in Section~\ref{sec:critical_points} to provide a method to infer the homology of the so-called smooth adaptive offsets by looking solely at the approximate offsets.

For a compact set $L\subset \R^d$ and $\beta\geq 0$  denote by $L^\beta:=\{x\in \R^d\mid \min_{y\in L} \|x-y\|\leq \beta\}$ the offsets of $L$ with respect to the Euclidean metric.
The following lemmas gives upper and lower bounds on the value of a smoothing of the distance-to-set function $f_L$, $\widetilde{f_L}$, which is defined on an arbitrarily smaller subset of Euclidean space.
\begin{lemma}\label{lem:smooth}
Consider a compact set $L\subset \R^d$. Given $\alpha\in (0,1)$, for all $\beta\in (0,1)$, there exists smooth function $\widetilde{f_L}:\R^d\setminus L^\beta \rightarrow \R$ such that for all $x\in \R^d\setminus L^\beta$, $(1-\alpha)f_L(x) < \widetilde{f_L}(x) <(1+\alpha)f_L(x)$.

\end{lemma}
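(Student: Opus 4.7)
The plan is to construct $\widetilde{f_L}$ by standard mollification of $f_L$. Fix a non-negative bump function $\phi \in C_c^\infty(\R^d)$ supported in the unit ball with $\int \phi = 1$, and for $r > 0$ let $\phi_r(y) := r^{-d}\phi(y/r)$, which is supported in $\ball(0,r)$ and still integrates to $1$. Define $\widetilde{f_L} := f_L * \phi_r$ on all of $\R^d$; this is smooth because the convolution inherits the smoothness of $\phi_r$. Restricting the domain to $\R^d \setminus L^\beta$ is then merely a matter of choosing $r$ small enough that the estimates below give a multiplicative rather than additive approximation.

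The key estimate is the uniform bound $|\widetilde{f_L}(x) - f_L(x)| \le r$, which follows from the fact that $f_L$ is $1$-Lipschitz in the Euclidean metric. Indeed,
\[
  |\widetilde{f_L}(x) - f_L(x)| = \Bigl|\int_{\ball(0,r)} (f_L(x-y) - f_L(x))\,\phi_r(y)\,dy\Bigr| \le \int_{\ball(0,r)} \|y\|\,\phi_r(y)\,dy \le r.
\]
Now I would use the fact that for $x \in \R^d \setminus L^\beta$ the definition of $L^\beta$ forces $f_L(x) > \beta$, so the additive error $r$ is bounded above by $r < \alpha \beta < \alpha f_L(x)$ provided I set, say, $r := \alpha\beta/2$. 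Chaining these inequalities yields
\[
  (1-\alpha) f_L(x) < f_L(x) - r \le \widetilde{f_L}(x) \le f_L(x) + r < (1+\alpha) f_L(x),
\]
which is exactly the claimed two-sided multiplicative bound.

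There is no substantive obstacle here; the argument is a textbook mollification with the Lipschitz constant of $f_L$ and the strict lower bound $f_L(x) > \beta$ doing all the work. The only small point to keep in mind is that the smoothing radius $r$ must be tied to $\beta$ (not to $\alpha$ alone) so that the additive error is small compared to the pointwise value of $f_L$ on the domain of interest; this is why the lemma carves out the $\beta$-offset of $L$ rather than attempting to smooth across $L$ itself, where $f_L$ vanishes and no multiplicative comparison is possible.
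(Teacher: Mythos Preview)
Your proof is correct and follows essentially the same route as the paper: obtain a smooth function within additive distance $\varepsilon$ of $f_L$, then use the strict lower bound $f_L(x)>\beta$ on $\R^d\setminus L^\beta$ together with the choice $\varepsilon<\alpha\beta$ to upgrade the additive error to the desired multiplicative one. The only difference is that you build the smoothing explicitly by mollification (exploiting that $f_L$ is $1$-Lipschitz), whereas the paper simply invokes a smoothing result from \cite{greene79approximations}; this makes your argument more self-contained but is otherwise the same idea.
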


\begin{proof}
By a result from ~\cite{greene79approximations}, for all $\e>0$, there exists a smoothing $\widetilde{f_L}:\R^d\setminus L^\beta \rightarrow \R$ of the distance function $f_L$ such that $\|f_L-\widetilde{f_L}\|_{\infty}<\e$.
Choose $\e=\beta\alpha$, for the given $\alpha\in (0,1)$.
By the approximation property of $\widetilde{f_L}$, for all $x\in\R^d\setminus L^\beta$ we have that $f_L(x)-\e <\widetilde{f_L}(x) <f_L(x)+\e$.
Also note that for all $x\in \R^d\setminus L^\beta$, $f_L(x)>\beta=\frac{\e}{\alpha}$ and thus $\alpha f_L(x)>\e$.
Combining the aforementioned we have that $f_L(x)(1-\alpha)< f_L(x)-\e$ and $f_L(x)+\e< f_L(x)(1+\alpha)$.
\end{proof}

Consider $\widetilde{f_L}$ as defined in Lemma~\ref{lem:smooth}.
Using this we can define a smooth adaptive distance function $\widetilde{f^L_X}$ and provide upper and lower bounds on its value with respect to the original adaptive distance function $f_X^L$.
For $x,y\in\R^d\setminus L^\beta$, we define 
\[
\widetilde{\dist^{L}}(x,y):= \inf_{\gamma\in\text{Path}(x,y)} \int_\gamma \frac{dz}{\widetilde{f_L}(z)}
\]
and $\widetilde{f_X^L}(y):= \widetilde{d^L}(y, X)$.

\begin{lemma}\label{lem:smooth_distance}
Given  $\alpha,\beta \in (0,1)$ and a smooth function $\widetilde{f_L}$ defined on $\R^d\setminus L^\beta$ as constructed in the proof of Lemma~\ref{lem:smooth}, consider  a compact set $X\subset \R^d\setminus L^\beta$. The Riemannian distance function $\widetilde{f^L_X}(\cdot):=\widetilde{\dist^L}(\cdot,X)$ satisfies the following property for all $y\in \R^d\setminus L^\beta$, 
\[
\frac{1}{1+\alpha}f_X^L(y)< \widetilde{f_X^L}(y)< \frac{1}{1-\alpha}f_X^L(y).
\]
\end{lemma}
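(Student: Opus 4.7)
The idea is to transfer the pointwise estimate on $\widetilde{f_L}$ from Lemma~\ref{lem:smooth} to a bound on the path integrals that define the two adaptive distance functions, and then take infima over paths. Inverting the bound $(1-\alpha)f_L(z) < \widetilde{f_L}(z) < (1+\alpha)f_L(z)$, which holds on all of $\R^d\setminus L^\beta$, gives
\[
\frac{1}{(1+\alpha)f_L(z)} \;<\; \frac{1}{\widetilde{f_L}(z)} \;<\; \frac{1}{(1-\alpha)f_L(z)}.
\]
Integrating this pointwise inequality along any path $\gamma\in\text{Path}(y,X)$ lying in $\R^d\setminus L^\beta$ then yields
\[
\frac{1}{1+\alpha}\int_\gamma \frac{dz}{f_L(z)} \;<\; \int_\gamma \frac{dz}{\widetilde{f_L}(z)} \;<\; \frac{1}{1-\alpha}\int_\gamma \frac{dz}{f_L(z)}.
\]

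For the lower bound $\widetilde{f_X^L}(y) > \frac{1}{1+\alpha}\,f_X^L(y)$, I would pick, for any $\e>0$, an $\e$-near-optimal path $\widetilde{\gamma}$ for $\widetilde{f_X^L}(y)$. Since $\widetilde{\gamma}\subset \R^d\setminus L^\beta \subset \R^d\setminus L$, it is admissible for $f_X^L$ as well, so the left half of the displayed inequality, combined with $f_X^L(y)\le \int_{\widetilde{\gamma}} dz/f_L(z)$, gives $\widetilde{f_X^L}(y)+\e > \frac{1}{1+\alpha}f_X^L(y)$; letting $\e\to 0$ closes this direction. This case is essentially routine because the smooth admissible class is contained in the unsmoothed one.

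For the upper bound $\widetilde{f_X^L}(y) < \frac{1}{1-\alpha}\,f_X^L(y)$, I would instead take an $\e$-near-optimal path $\gamma$ for $f_X^L(y)$ and apply the right half of the integral estimate. The main obstacle sits precisely here: the infimum defining $\dist^L$ is taken over paths in $\R^d\setminus L$, whereas $\widetilde{\dist^L}$ requires paths in the strictly smaller set $\R^d\setminus L^\beta$, so a near-minimizer for $f_X^L$ might dip into the shell $L^\beta\setminus L$ and fail to be admissible for $\widetilde{f_X^L}$. However, because $1/f_L$ blows up near $L$, any such excursion contributes disproportionately to $\int_\gamma dz/f_L$, and I would argue that a small radial outward perturbation yields an admissible path whose cost inflates by at most $O(\e)$; letting $\e\to 0$ then gives the bound. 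Strictness in both final inequalities is inherited from the strict pointwise bound of Lemma~\ref{lem:smooth} together with the fact that $\gamma$ has positive length.
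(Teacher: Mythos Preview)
Your strategy---invert the pointwise bounds on $\widetilde{f_L}$, integrate along near-optimal paths, and let $\e\to 0$---is exactly what the paper does; the only cosmetic difference is that the paper first establishes $\frac{1}{1+\alpha}\dist^L(x,y)<\widetilde{\dist^L}(x,y)<\frac{1}{1-\alpha}\dist^L(x,y)$ for pairs of points and then passes to $f_X^L$, $\widetilde{f_X^L}$ by taking argmins over $X$, while you work directly with paths to $X$. Your lower bound is clean because $\R^d\setminus L^\beta\subset\R^d\setminus L$, so admissibility runs in the right direction.

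You are right to flag the admissibility issue in the upper bound, and in fact the paper's own proof silently skips the same point: it writes $\widetilde{\dist^L}(x,y)\le\int_\gamma dz/\widetilde{f_L}(z)$ for a near-minimizer $\gamma$ of $\dist^L$ without verifying that $\gamma\subset\R^d\setminus L^\beta$. However, your proposed patch---a ``small radial outward perturbation'' with cost increase $O(\e)$---is not a proof and can fail outright. For a general compact $L$ there is no canonical radial direction, and more seriously $L^\beta$ can disconnect points that are joined in $\R^d\setminus L$: take $L\subset\R^2$ to be a circle with a gap of width less than $2\beta$, put $y$ at the center and $X=\{x\}$ a point well outside the circle; then $f_X^L(y)<\infty$, but every path from $y$ to $x$ must cross the annulus $L^\beta$, so $\widetilde{f_X^L}(y)=\infty$ and the claimed upper bound is false. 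Thus the lemma as stated requires an additional hypothesis (for instance that $\beta$ is small enough that near-optimal $\dist^L$-paths between points of $\R^d\setminus L^\beta$ already avoid $L^\beta$), which neither the paper nor your sketch supplies.
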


\begin{proof}
%TODO: $\widetilde{f^L_X}(\cdot)$ is a Riemannian distance function.
Given two points $x,y\in \R^d\setminus L^\beta$, and any $\e > 0$, consider $\gamma,\gamma' \in\text{Path}(x,y)$ such that $\dist^L(x,y)\leq \int_\gamma \frac{dz}{f_L(z)} \leq \dist^L(x,y) +\e$ and  $\widetilde{\dist^L}(x,y)\leq \int_{\gamma'} \frac{dz}{\widetilde{f_L}(z)} \leq \widetilde{\dist^L}(x,y) + \e$.
We then have the following inequalities resulting from inverting the inequalities in Lemma~\ref{lem:smooth}.
\[
\widetilde{\dist^L}(x,y)\leq \int_{\gamma}\frac{dz}{\widetilde{f_L}(z)} < \frac{1}{1-\alpha} \int_{\gamma}\frac{dz}{f_L(z)} \leq \frac{1}{1-\alpha}\dist^L(x,y)+\frac{\e}{1-\alpha},
\]

and

\[
\frac{1}{1+\alpha}\dist^L(x,y) \leq \frac{1}{1+\alpha}\int_{\gamma'}\frac{dz}{f_L(z)} < \int_{\gamma'}\frac{dz}{\widetilde{f_L}(z)} \leq \widetilde{\dist^L}(x,y) + \e.
\].

Since these equalities hold for all $\e>0$, then we can conclude that for all pairs $x,y\in\R^d\setminus L^\beta$, $\frac{1}{1+\alpha}\dist^L(x,y) < \widetilde{\dist^L}(x,y) < \frac{1}{1-\alpha}\dist^L(x,y)$.

Now consider $y\in \R^d\setminus L^\beta$. 
Denote $x'=\argmin_{x\in X}\dist^L(y,x)$ and $x''=\argmin_{x\in X}\widetilde{\dist^L}(y,x)$.
We remind the reader that these points' existences are guaranteed by the Extreme Value Theorem.
By examining these variables with respect to the previous inequality we know that
\[
\frac{1}{1+\alpha}\dist^L(y,x') \leq \frac{1}{1+\alpha}\dist^L(y,x'') < \widetilde{\dist^L}(y,x'')\leq\widetilde{\dist^L}(y,x') < \frac{1}{1-\alpha}d^L(y,x').
\]

By applying the definitions of both adaptive distance functions to the previous expression we obtain the desired inequality,
\[
\frac{1}{1+\alpha}f^L_X(y)<\widetilde{f^L_X}(y)<\frac{1}{1-\alpha}f^L_X(y).
\]
\end{proof}

Define the Riemannian adaptive offsets of $X$ as $\widetilde{A}_X^L(\alpha):= \{x\in \R^d\mid \widetilde{f_X^L}(x)\leq\alpha\}$, and denote the corresponding filtration by $\widetilde{\mathcal{A}}_X^L$.
The following result reestablishes Lemma~\ref{lem:smooth_distance} in the language of filtrations and establishes an interleaving of the Riemannian adaptive offsets with the original adaptive offsets.
%using Section~\ref{sec:critical_points}.

\begin{corollary}\label{cor:smooth_interleaving}
Consider a compact set $L\subset \R^d$.
 Given $\alpha,\beta\in (0,1)$, for compact $X\subset \R^d\setminus L^\beta$, there exists a Riemannian distance function $\widetilde{f_X^L}:\R^d\rightarrow \R$, such that $(\widetilde{\mathcal{A}}_X^L,\mathcal{A}_X^L)$ are $(h_6,h_7)$-interleaved on $(0,\infty)$, where $h_6(r)=(1+\alpha)r$ and $h_7(r)=\frac{r}{1-\alpha}$.
\end{corollary}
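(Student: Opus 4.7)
The plan is to transfer the pointwise comparison of $f_X^L$ and $\widetilde{f_X^L}$ established in Lemma~\ref{lem:smooth_distance} into inclusions between their sublevel sets, which is exactly what the interleaving definition (Definition~\ref{def:interleaved}) requires. Since the comparison in Lemma~\ref{lem:smooth_distance} is a two-sided multiplicative bound with constants $\frac{1}{1+\alpha}$ and $\frac{1}{1-\alpha}$, it is naturally paired with sublevel set scalings by $(1+\alpha)$ and $\frac{1}{1-\alpha}$, which are precisely $h_6$ and $h_7$.

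First I would fix the smoothing $\widetilde{f_L}$ on $\R^d\setminus L^\beta$ produced by Lemma~\ref{lem:smooth} for the given $\alpha,\beta\in(0,1)$, and let $\widetilde{f_X^L}(\cdot):=\widetilde{\dist^L}(\cdot,X)$ be the resulting Riemannian distance function; by Lemma~\ref{lem:smooth_distance} this function satisfies
\[
\frac{1}{1+\alpha}f_X^L(y)<\widetilde{f_X^L}(y)<\frac{1}{1-\alpha}f_X^L(y)
\]
for every $y\in\R^d\setminus L^\beta$.

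Next I would verify the two inclusions separately. For the first, take $r\in(0,\infty)$ and any $x\in\widetilde{A}_X^L(r)$, so $\widetilde{f_X^L}(x)\le r$. The left inequality of Lemma~\ref{lem:smooth_distance} gives $f_X^L(x)<(1+\alpha)\widetilde{f_X^L}(x)\le(1+\alpha)r=h_6(r)$, hence $x\in A_X^L(h_6(r))$. For the reverse inclusion, take $x\in A_X^L(r)$, so $f_X^L(x)\le r$. The right inequality of Lemma~\ref{lem:smooth_distance} then yields $\widetilde{f_X^L}(x)<\frac{1}{1-\alpha}f_X^L(x)\le\frac{r}{1-\alpha}=h_7(r)$, hence $x\in\widetilde{A}_X^L(h_7(r))$. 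Both $h_6$ and $h_7$ are non-decreasing on $(0,\infty)$, so the interleaving data of Definition~\ref{def:interleaved} is satisfied on the entire interval $(0,\infty)$.

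The only subtle point, and the step that merits care rather than difficulty, is that $\widetilde{f_X^L}$ is actually only defined on $\R^d\setminus L^\beta$, while the corollary writes it with domain $\R^d$; I would handle this by working inside $\R^d\setminus L^\beta$ throughout (noting $X\subset\R^d\setminus L^\beta$ by hypothesis and that any point outside this set is irrelevant for the sublevel sets $\widetilde{A}_X^L(r)$ as defined). Apart from this bookkeeping, no new estimate is needed: the corollary is an immediate functional reformulation of Lemma~\ref{lem:smooth_distance}.
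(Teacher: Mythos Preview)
Your proof is correct and follows essentially the same approach as the paper: both invoke Lemma~\ref{lem:smooth_distance} and translate the two-sided multiplicative bound into the two sublevel-set inclusions $\widetilde{A}_X^L(r)\subseteq A_X^L((1+\alpha)r)$ and $A_X^L(r)\subseteq\widetilde{A}_X^L\bigl(\frac{r}{1-\alpha}\bigr)$. Your additional remarks on monotonicity of $h_6,h_7$ and the domain bookkeeping for $\R^d\setminus L^\beta$ are careful touches that the paper's proof leaves implicit.
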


\begin{proof}
By  Lemma~\ref{lem:smooth_distance}, there exists a Riemannian distance function $\widetilde{f^L_X}:\R^d\rightarrow \R$, such that for all $y\in \R^d\setminus L^\beta$, 
\[
\frac{1}{1+\alpha}f^L_X(y)<\widetilde{f^L_X}(y)<\frac{1}{1-\alpha}f^L_X(y),
\]

so for $r\in(0,\infty)$ and $y\in \widetilde{A}^L_X(r)$, $\widetilde{f_X^L}(y)\leq r$, and thus $f_X^L(y)\leq (1+\alpha)r$, which implies that $y\in A_X^L((1+\alpha)r)$, so  $\widetilde{A}^L_X(r)\subseteq A_X^L((1+\alpha)r)$.

On the other hand, for $r\in (0,\infty)$ and $y\in A^L_X(r)$, $f_X^L(y)\leq r$, and thus $\widetilde{f^L_X}(r)\leq\frac{r}{1-\alpha}$, so $ A_X^L(r)\subseteq \widetilde{A}^L_X(\frac{r}{1-\alpha}).$
\end{proof}

Combining the previous corollary with Theorem~\ref{thm:biginterleaving} in Subsection~\ref{sub:putting_it_all_together}, we obtain an interleaving between the Riemannian adaptive offsets and the approximate offsets.
This will then allow us to apply Lemma~\ref{lem:isotopy} and standard topological data analysis techniques to this interleaving to give a method of homology inference for arbitrary small offsets of $X$ as we have a Reimannian distance function generating the smooth adaptive offsets filtration.
\begin{lemma}~\label{lem:biggerinterleaving}
Given $\alpha,\beta\in (0,1)$, consider compact sets $\Lhat\subseteq L \subset \R^d$ and compact sets $\Xhat\subseteq X \subset \R^d\setminus L^\beta$, such that $\dist_H^{\hat{X}} (L,\hat{L})\leq \delta <1$ and $\dist_H^L(\hat{X},X)\leq \e <1$, then $(\tilde{\mathcal{A}}_X^L, \mathcal{B}_{\hat{X}}^{\hat{L}})$ are $(h_8, h_9)$-interleaved on $(0,1)$, where $h_8(r)= \frac{r+\alpha r+ \e}{(1-r-r\alpha-\e)(1-\delta)}$ and $h_9(r)=\frac{r}{(1-\alpha)(1-\delta-r)} + \frac{\e}{1-\alpha}$.
\end{lemma}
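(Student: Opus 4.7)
The plan is to view this lemma as a direct composition of two interleavings that have already been established in the paper, and then simply verify that the composed interleaving functions match $h_8$ and $h_9$. Specifically, Corollary~\ref{cor:smooth_interleaving} gives me $(\widetilde{\mathcal{A}}_X^L,\mathcal{A}_X^L)$ as $(h_6,h_7)$-interleaved on $(0,\infty)$ with $h_6(r)=(1+\alpha)r$ and $h_7(r)=\frac{r}{1-\alpha}$, and Theorem~\ref{thm:biginterleaving} gives me $(\mathcal{A}_X^L,\mathcal{B}_{\Xhat}^{\Lhat})$ as $(h_4,h_5)$-interleaved on $(0,1)$ with $h_4(r)=\frac{r+\e}{(1-r-\e)(1-\delta)}$ and $h_5(r)=\frac{r}{1-\delta-r}+\e$. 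The hypotheses of Theorem~\ref{thm:biginterleaving} apply because $X\subset \R^d\setminus L^\beta\subset \R^d\setminus L$ and the Hausdorff-distance bounds on $\e$ and $\delta$ are exactly those assumed here.

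Next, I would invoke Lemma~\ref{lem:composinginterleavings} on the pair of interleavings above. Because the first interleaving is on $(0,\infty)$ and the second on $(0,1)$, the intersection is $(0,1)$, which matches the interval declared in the lemma statement. The lemma yields that $(\widetilde{\mathcal{A}}_X^L,\mathcal{B}_{\Xhat}^{\Lhat})$ is $(h_4\circ h_6,\; h_7\circ h_5)$-interleaved on $(0,1)$, so the only remaining task is an algebraic verification.

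For the forward direction, I would simply substitute $h_6(r)=(1+\alpha)r=r+\alpha r$ into $h_4$ and obtain
\[
(h_4\circ h_6)(r) \;=\; \frac{(1+\alpha)r+\e}{\bigl(1-(1+\alpha)r-\e\bigr)(1-\delta)} \;=\; \frac{r+\alpha r+\e}{(1-r-r\alpha-\e)(1-\delta)} \;=\; h_8(r).
\]
For the reverse direction, I would apply $h_7$ to $h_5$ to obtain
\[
(h_7\circ h_5)(r) \;=\; \frac{1}{1-\alpha}\!\left(\frac{r}{1-\delta-r}+\e\right) \;=\; \frac{r}{(1-\alpha)(1-\delta-r)}+\frac{\e}{1-\alpha} \;=\; h_9(r).
\]
Both $h_8$ and $h_9$ are non-decreasing on $(0,1)$ in their respective regimes of positivity, so they satisfy the monotonicity requirement of Definition~\ref{def:interleaved}.

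There is essentially no deep obstacle in this argument; the only thing that requires care is ensuring that all three interleaving intervals overlap in $(0,1)$ and that the ordering of the compositions is correct (since interleaving is asymmetric in $(h_1,h_2)$, one must compose on the outside on one side and on the inside on the other, as stated in Lemma~\ref{lem:composinginterleavings}). Once the compositions are lined up correctly, the lemma is simply the arithmetic verification above.
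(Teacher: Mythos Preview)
Your proposal is correct and follows exactly the same route as the paper: invoke Theorem~\ref{thm:biginterleaving} and Corollary~\ref{cor:smooth_interleaving}, then compose via Lemma~\ref{lem:composinginterleavings}. In fact, you carry out the algebraic verification of $h_4\circ h_6=h_8$ and $h_7\circ h_5=h_9$ more explicitly than the paper does, which simply asserts that ``composing the necessary functions'' yields the stated interleaving.
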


\begin{proof}
 The hypotheses of the statement satisfy the hypotheses of both Theorem~\ref{thm:biginterleaving} and Corollary~\ref{cor:smooth_interleaving} so one knows that $(\mathcal{A}_X^L, \mathcal{B}_{\Xhat}^{\Lhat})$ are $(h_4, h_5)$-interleaved on $(0,1)$, where $h_4(r) = \frac{r+\e}{(1-r-\e)(1-\delta)}$, and $h_5(r) = \frac{r}{1-\delta-r}+\e$ and $(\widetilde{\mathcal{A}}_X^L, \mathcal{A}_X^L)$ are $(h_6,h_7)$-interleaved on $(0,\infty)$, where $h_6(r)=(1+\alpha)r$ and $h_7(r)=\frac{r}{1-\alpha}$.
 By applying Lemma~\ref{lem:composinginterleavings} and composing the necessary functions, we achieve the stated interleavings.
 \end{proof}
 
For clarity, we rewrite Lemma~\ref{lem:isotopy} with our definitions.
 
  \begin{lemma}\label{lem:critical_points_of_distance_functions}
    Let $X\subset \R^d$ be a compact set and let $\widetilde{f_X^L}(\cdot)=\widetilde\dist_L(\cdot, X)$ be the smooth Riemannian adaptive distance function.
    If an interval $[s,t]$ contains no critical values of $\widetilde{f_X^L}(\cdot)$, then the inclusion $\widetilde{A}_X^L(s)\hookrightarrow \widetilde{A}_X^L(t)$ is a homotopy equivalence.
  \end{lemma}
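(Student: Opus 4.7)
The plan is to treat this lemma as a translation exercise: the statement is essentially Lemma~\ref{lem:isotopy} specialized to the smooth adaptive distance function constructed in Section~\ref{sec:smoothing_adaptive}, so the entire task is to identify the abstract ingredients $(M, f_X)$ of the general critical point lemma with the concrete objects at hand. There is no new geometric content to prove, only a verification that the hypotheses line up.

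First, I would take as the ambient smooth Riemannian manifold $M := \R^d \setminus L^\beta$, equipped with the conformally rescaled Euclidean metric $g := \widetilde{f_L}^{-2}\, g_{\mathrm{Euc}}$. The key point is that this genuinely defines a smooth Riemannian metric: by the construction of $\widetilde{f_L}$ in Lemma~\ref{lem:smooth}, the function is smooth on $M$, and the lower bound $\widetilde{f_L}(x) > (1-\alpha) f_L(x) \ge (1-\alpha)\beta > 0$ on $M$ keeps the conformal factor bounded away from zero. Next, the Riemannian arc length of a unit-speed Euclidean curve $\gamma$ in $(M,g)$ is exactly $\int_\gamma dz / \widetilde{f_L}(z)$, so the geodesic distance on $(M,g)$ coincides with $\widetilde{\dist^L}$, and $\widetilde{f_X^L}(\cdot) = \widetilde{\dist^L}(\cdot, X)$ is precisely the Riemannian distance-to-$X$ function required by Lemma~\ref{lem:isotopy}. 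The set $X$ sits inside $M$ by the standing hypothesis $X \subset \R^d \setminus L^\beta$, and is compact by assumption. Finally, the sublevel sets $\widetilde{A}_X^L(r)$ are, by definition, the preimages $\widetilde{f_X^L}^{-1}[0,r]$ appearing in Lemma~\ref{lem:isotopy}.

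With these identifications, the conclusion is immediate: Lemma~\ref{lem:isotopy} applied on $(M,g)$ to the compact set $X$ and the interval $[s,t]$ that by hypothesis contains no critical values of $\widetilde{f_X^L}$ gives the desired homotopy equivalence $\widetilde{A}_X^L(s)\hookrightarrow \widetilde{A}_X^L(t)$. The only point that requires genuine care is the one already addressed above, namely confirming that removing the $\beta$-offset $L^\beta$ is exactly what is needed to guarantee both the smoothness and the positivity of the conformal factor so that the critical-point machinery of Grove applies without modification; beyond this bookkeeping there is no obstacle.
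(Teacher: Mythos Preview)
Your proposal is correct and matches the paper's approach exactly: the paper introduces this lemma with the sentence ``For clarity, we rewrite Lemma~\ref{lem:isotopy} with our definitions'' and gives no further proof, so it treats the statement purely as a specialization of the general critical-point lemma to the conformal metric $\widetilde{f_L}^{-2}\,g_{\mathrm{Euc}}$ on $\R^d\setminus L^\beta$. You have simply made explicit the identifications (the manifold, the metric, the positivity and smoothness of the conformal factor) that the paper leaves implicit.
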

  
  Weak feature size (wfs), introduced by Chazal and Leutier in~\cite{chazal05weak}, is the least positive critical value of a Riemannian distance function.
  We denote by $\textrm{wfs}_L(X)$ the weak feature size with respect to $\widetilde{f_X^L}(\cdot)$.
  
  We now state our final theorem and its corollary which together provide a method to infer the homology in all dimensions of an arbitrarily small adaptive offset of our space $X$ of interest by looking at the homology of the inclusion between the approximate offsets of the sample $\Xhat$ with respect to $\Lhat$.
  
  \begin{theorem}\label{thm:homology_inference}
Given $\alpha,\beta\in (0,1)$, consider compact sets $\hat{L}\subseteq L \subset \R^d$ and compact sets $\hat{X}\subseteq X \subset \R^d\setminus L^\beta$, such that $\dist_H^{\hat{X}} (L,\hat{L})\leq \delta <1$ and $\dist_H^L(\hat{X},X)\leq \e <1$.
Given any $\eta>0$, such that $h_9h_8h_9h_8(\eta)<1$, if $\textrm{wfs}_L(X)>h_9h_8h_9 h_8(\eta)$, then 
\[
H_*(\widetilde{A}_X^L(\eta))\cong \textrm{im}\,(H_*(B_{\Xhat}^{\Lhat}(h_8 (\eta))) \hookrightarrow B_{\Xhat}^{\Lhat} (h_8h_9h_8(\eta))).\]
\end{theorem}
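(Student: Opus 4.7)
The strategy is to assemble a short chain of inclusions whose endpoints are sublevel sets of $\widetilde{f_X^L}$ rendered homotopy equivalent by the weak-feature-size hypothesis, and then apply the classical ``sandwich'' argument from persistence-based homology inference.

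First, I would invoke Lemma~\ref{lem:biggerinterleaving} to obtain the $(h_8,h_9)$-interleaving of $(\widetilde{\mathcal{A}}_X^L,\mathcal{B}_{\hat{X}}^{\hat{L}})$ on $(0,1)$.  Instantiating it at the scales $\eta$, $h_8(\eta)$, $h_9h_8(\eta)$, $h_8h_9h_8(\eta)$, $h_9h_8h_9h_8(\eta)$ produces the chain of inclusions
\[
\widetilde{A}_X^L(\eta) \hookrightarrow B_{\hat{X}}^{\hat{L}}(h_8(\eta)) \hookrightarrow \widetilde{A}_X^L(h_9h_8(\eta)) \hookrightarrow B_{\hat{X}}^{\hat{L}}(h_8h_9h_8(\eta)) \hookrightarrow \widetilde{A}_X^L(h_9h_8h_9h_8(\eta)).
\]
The hypothesis $h_9h_8h_9h_8(\eta)<1$ together with the monotonicity (and the fact that $h_i(r)\ge r$ on the relevant domain) of $h_8$ and $h_9$ guarantees that every intermediate scale lies in $(0,1)$, so the interleaving hypotheses are met at each step.

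Second, I would apply Lemma~\ref{lem:critical_points_of_distance_functions}.  Since $\textrm{wfs}_L(X) > h_9h_8h_9h_8(\eta)$, the interval $[\eta,h_9h_8h_9h_8(\eta)]$ contains no critical values of $\widetilde{f_X^L}$, so each inclusion between two $\widetilde{A}_X^L$ offsets appearing in the chain, and in particular the end-to-end inclusion $\widetilde{A}_X^L(\eta)\hookrightarrow \widetilde{A}_X^L(h_9h_8h_9h_8(\eta))$ and the middle one $\widetilde{A}_X^L(h_9h_8(\eta))\hookrightarrow \widetilde{A}_X^L(h_9h_8h_9h_8(\eta))$, is a homotopy equivalence and hence an isomorphism on $H_*$.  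Denote the four induced maps on homology along the chain by $\alpha_1,\beta_1,\alpha_2,\beta_2$, so that $\beta_1\circ\alpha_1$ and $\beta_2\circ\alpha_2$ are isomorphisms.  A short diagram chase then shows that $\alpha_1$ is injective, $\beta_1$ is surjective, and $\alpha_2$ is injective, whence $\alpha_2\circ\beta_1$, which is the $H_*$-map of $B_{\hat{X}}^{\hat{L}}(h_8(\eta))\hookrightarrow B_{\hat{X}}^{\hat{L}}(h_8h_9h_8(\eta))$, has rank equal to $\dim H_*(\widetilde{A}_X^L(\eta))$; the map $\alpha_2\circ\beta_1\circ\alpha_1$ is injective (since post-composing with $\beta_2$ yields the iso $H_*(\widetilde{A}_X^L(\eta))\to H_*(\widetilde{A}_X^L(h_9h_8h_9h_8(\eta)))$), and its image is therefore all of $\textrm{im}(\alpha_2\circ\beta_1)$, giving the desired isomorphism $H_*(\widetilde{A}_X^L(\eta))\cong \textrm{im}(\alpha_2\circ\beta_1)$.

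The main obstacle is the bookkeeping in the first step: verifying that $h_8,h_9$ are non-decreasing and that their iterated images stay below $1$, so that the interleaving can legally be applied at all five scales.  Once this is established, the critical-point input is immediate from the hypothesis on $\textrm{wfs}_L(X)$, and the sandwich step is a standard and purely diagrammatic rank-counting argument over a field.
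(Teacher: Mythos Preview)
Your proposal is correct and follows essentially the same approach as the paper: build the five-term inclusion chain from Lemma~\ref{lem:biggerinterleaving}, use the weak-feature-size hypothesis with Lemma~\ref{lem:critical_points_of_distance_functions} to make the two $\widetilde{A}_X^L$-to-$\widetilde{A}_X^L$ composites into homology isomorphisms, and then run the standard sandwich/diagram-chase to identify $H_*(\widetilde{A}_X^L(\eta))$ with the image of the middle $B$-map. Your write-up is in fact a bit more careful than the paper's (you make explicit the monotonicity/range check on $h_8,h_9$ and spell out why $\alpha_2\circ\beta_1\circ\alpha_1$ is injective with image equal to $\mathrm{im}(\alpha_2\circ\beta_1)$); the only superfluous remark is the appeal to ``rank-counting over a field,'' since the injectivity/surjectivity argument already yields the isomorphism without any coefficient assumption.
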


\begin{proof}

Given $\eta>0$ such that $h_9h_8h_9h_8(\eta)<1$, we have the following sequence of inclusions as a result of Lemma~\ref{lem:biggerinterleaving}.
\begin{equation}\label{eq:inclusions}
\xymatrix{
	\widetilde{A}_X^L(\eta)~\ar@{^{(}->}[r]^(.4){a} & B_{\Xhat}^{\Lhat}(h_8(\eta))~\ar@{^{(}->}[r]^(.45){b} & \widetilde{A}_X^L(h_9h_8(\eta))~\ar@{^{(}->}[r]^(.6){c} & \ldots \\
	\ldots \ar[r]^(.25){c} &B_{\Xhat}^{\Lhat}(h_8h_9h_8(\eta))~\ar@{^{(}->}[r]^(.47){d} & \widetilde{A}_X^L(h_9h_8h_9h_8(\eta))  .
	}
\end{equation}

As we assume that $\wfs_L(X)>h_9h_8h_9 h_8(\eta)$, by the definition of weak feature size, Lemma~\ref{lem:critical_points_of_distance_functions} implies that the inclusions $b\circ a$ and $d\circ c$ are homotopy equivalences. 
We remind the reader that if two spaces are homotopy equivalent, all the induced homology maps between the spaces are isomorphisms.
By applying homology to each space and inclusion in the previous sequence, we have the following sequence of homology groups, where $b_*\circ a_*$ and $d_*\circ c_*$ are isomorphisms.

\begin{equation}\label{eq:homology_inclusions}
\xymatrix{
	H_*(\widetilde{A}_X^L(\eta))~\ar[r]^(.4){a_*} & H_*(B_{\Xhat}^{\Lhat}(h_8(\eta)))~\ar[r]^(.45){b_*} & H_*(\widetilde{A}_X^L(h_9h_8(\eta)))~\ar[r]^(.7){c_*} & \\
	 & H_*(B_{\Xhat}^{\Lhat}(h_8h_9h_8(\eta)))~\ar[r]^(.47){d_*} & H_*(\widetilde{A}_X^L(h_9h_8h_9h_8(\eta))).
	}
\end{equation}

The aforementioned isomorphisms $b_*\circ a_*$ and $d_*\circ c_*$ factor through $B_{\Xhat}^{\Lhat}(h_8(\eta))$ and $B_{\Xhat}^{\Lhat}(h_8h_9h_8(\eta))$ respectively, proving that $b_*$ is surjective and $c_*$ is injective.
We then have that $H_*(\widetilde{A}_X^L(\eta))\cong H_*(\widetilde{A}_X^L(h_9h_8(\eta))) \cong \textrm{im}\,b_*\cong \textrm{im}\,(c_*\circ b_*)$.
\end{proof}

Furthermore, if we assume that our approximation $\Xhat$ of $X$ is a finite sample, then each $B_{\Xhat}^{\Lhat}(\alpha)$ at any scale is the union of a finite number of Euclidean balls.
We may then consider the good covers $\{\ball(x, \alpha f_{\Lhat}(x))\}_{x\in \Xhat }$ of  each $B_{\Xhat}^{\Lhat}(\alpha)$, and thus the corresponding good cover filtration of $\mathcal{B}_{\Xhat}^{\Lhat}$.
Note these are good covers as each element is a Euclidean balls and thus their intersections are contractible so
by the Nerve Theorem~\cite{hatcher01}, 
\[
H_*(\text{Nrv}(\{\ball(x, \alpha f_{\Lhat}(x))\}_{x\in \Xhat }))\cong H_*(B_{\Xhat}^{\Lhat}(\alpha)).
\]
As the interleaving maps are inclusions, the Persistent Nerve Lemma~\cite{chazal08towards} applies to the result of Diagram~\ref{eq:homology_inclusions} combined with the above isomorphisms.
This allows us to examine the nerves to compute the homology of the arbitrarily small adaptive offset of $X$.
Explicitly it yields the following computable inference equality.
\[
H_*(\widetilde{A}_X^L(\eta))\cong \textrm{im}\;H_*(\text{Nrv}(\{\ball(x, h_8(\eta) f_{\Lhat}(x))\}_{x\in \Xhat}))\hookrightarrow \text{Nrv}(\{\ball(x, h_8h_9h_8(\eta) f_{\Lhat}(x))\}_{x\in \Xhat })).
\]

  %-----------------------------------------
  \bibliographystyle{unsrt} 
  \bibliography{bibliography}  
 
\end{document}